%
%
%
%

\documentclass[runningheads,11pt]{llncs} 
\usepackage{geometry}
\geometry{left=1in, right=1in}
\usepackage{amssymb}
\usepackage{etex} 

\usepackage[cmex10]{amsmath}

\usepackage{ntheorem}
\theoremstyle{break}

\setcounter{tocdepth}{3}
\usepackage{graphicx}
\usepackage{verbatim}
\usepackage{tikz}
\usetikzlibrary{arrows,positioning,shadows,matrix,calc}
\usepackage{pgf,pgfarrows,pgfnodes,pgfautomata,pgfheaps,pgfplots}
\usepackage{pst-plot}   

\usepackage[ruled,vlined,linesnumbered]{algorithm2e}

\usepackage{authblk}

\usepackage{wrapfig}


\begin{document}

\mainmatter  

\title{Constructing Privacy Channels from Information Channels}

\author{Genqiang Wu\footnote{This work was partially done when the author was a faculty member of the School of Information Engineering Lanzhou University of Finance and Economics and was a Ph.D candidate of the Institute of Software Chinese Academy of Sciences.}}
\institute{School of Computer Science and Engineering, Chongqing University of Technology 
\\
\email{wahaha2000@icloud.com}
}
\authorrunning{Genqiang Wu}

\maketitle

\begin{abstract}
Data privacy protection studies how to query a dataset while preserving the privacy of individuals whose sensitive information is contained in the dataset. The information privacy model protects the privacy of an individual by using a noisy channel, called privacy channel, to filter out most information of the individual from the query's output. 

This paper studies how to construct privacy channels, which is challenging since  it needs to evaluate the maximal amount of disclosed information  of each individual contained in the query's output, called individual channel capacity. 
Our main contribution is an interesting result which can transform the problem of evaluating a privacy channel's individual channel capacity, which equals the problem of evaluating the capacities of an infinite number of channels, into the problem of evaluating the capacities of a finite number of channels. This result gives us a way to utilize the results in the information theory to construct privacy channels. As some examples, it is used to construct several basic privacy channels, such as the random response privacy channel, the exponential privacy channel and the Gaussian privacy channel, which are respective counterparts of the random response mechanism, the exponential mechanism and the Gaussian mechanism of differential privacy.
\end{abstract}

\begin{keywords}
data privacy, differential privacy, privacy channel, Gaussian channel, channel capacity
\end{keywords}


\section{Introduction} \label{section-introduction}


The field of data privacy protection \cite{DBLP:journals/cacm/Dwork11,DBLP:journals/csur/FungWCY10,DBLP:series/ads/2008-34} studies how to query a dataset while preserving the privacy of individuals whose sensitive information is contained in the dataset. 
Since data privacy is a fundamental problem in today's information age, this field becomes more and more important.  
The differential privacy model \cite{DBLP:conf/tcc/DworkMNS06,DBLP:conf/icalp/Dwork06} is currently the most important and popular data privacy protection model and has obtained great success both in the computer science communities \cite{DBLP:journals/fttcs/DworkR14,DBLP:books/sp/17/Vadhan17} and in the industry companies \cite{DBLP:conf/ccs/ErlingssonPK14,DBLP:journals/corr/abs-1709-02753}.  
We reason that its success is strongly related to the feature that its definition of privacy is irrelevant to the semantic meaning of data, which significantly separates it from many other privacy models, such as the $k$-anonymity \cite{DBLP:journals/ijufks/Sweene02}, the $\ell$-diversity \cite{DBLP:conf/icde/MachanavajjhalaGKV06} or the $t$-closeness \cite{DBLP:conf/icde/LiLV07} model etc. Note that this feature is very similar to the one of Shannon's information theory which strips  away the ``semantic aspects'' from the definition of information \cite{Shannon1948,sep-information-semantic}. The other important feature of the differential privacy model is the composition privacy property, which makes it possible to modularly construct complex differentially private algorithms from simple ones.

In spite of the big success of the differential privacy model,  after a decade or so of study, researchers, however, found that this model is vulnerable to those attacks where adversaries have knowledge of the dependence among different records in the queried dataset \cite{DBLP:conf/sigmod/KiferM11}.  Furthermore, it seems that this model is not so flexible to provide satisfactory data utility \cite{DBLP:conf/crypto/GehrkeHLP12}, especially for the unstructured data \cite{DBLP:conf/tcc/KasiviswanathanNRS13,DBLP:conf/sigmod/ChenZ13}.
Therefore, many new privacy models are introduced to alleviate the above two defects of the differential privacy model, such as the crowd-blending privacy model \cite{DBLP:conf/crypto/GehrkeHLP12}, the Pufferfish model \cite{DBLP:journals/tods/KiferM14}, the coupled-worlds privacy model \cite{DBLP:conf/focs/BassilyGKS13}, the zero-knowledge privacy model \cite{DBLP:conf/tcc/GehrkeLP11}, the inferential privacy model \cite{DBLP:conf/innovations/GhoshK16} and the information privacy model \cite{wu-he-xia2018}, etc. 
These models  in general are either more  secure or more flexible in utility than the differential privacy model. Especially, they all inherit the first feature of differential privacy mentioned in the preceding paragraph. 
However, it seems that the second important feature of differential privacy, i.e. the composition privacy property, is very hard to be inherited by these privacy models.

We are more interested in the information privacy model, a variant of Shannon's information-theoretic model of the encryption systems \cite{6769090}, since it has many interesting properties as the differential privacy model, such as the post-processing invariance property \cite{wu-he-xia2018}, the group privacy property and the composition privacy property \cite{DBLP:journals/corr/abs-1907-09311}. 
Informally speaking, the information privacy model protects privacy by using a noisy channel, called privacy channel or privacy mechanism, to filter out most information of each individual from the query's output. 
This model also assumes that the entropy of each adversary's probability distribution to the queried dataset is not so small.
This assumption is used to improve the data utility of query's output, which is reasonable when the queried dataset is big enough. 

In this paper we consider how to construct privacy channels. Especially, we want to construct some basic ones, such as the Laplace privacy channel, the Gaussian privacy channel and the exponential privacy channel which should be the respective counterparts of the Laplace mechanism, the Gaussian mechanism and the exponential mechanism of the differential privacy model. The great attraction of these basic privacy channels is that by combining them with the composition and post-processing invariance properties, we can \emph{modularly} construct other complex privacy channels as in the differential privacy model.  
Although there is the great attraction, constructing privacy channels are challenging works
 since it is related to evaluate the maximal amount of information of each individual, called individual channel capacity of the privacy channel, which can be obtained by an adversary from the query's output. 
 We stress that the individual channel capacity in general is hard to be evaluated since it needs to evaluate the capacities of \emph{infinitely many} channels, which seems to be intractable.    
 
\textbf{Our Contribution:}  The main contribution of this paper is an interesting result which can transform the problem of evaluating a privacy channel's individual channel capacity, which is related to the problem of evaluating the capacities of an \emph{infinite number} of channels, into the problem of evaluating the capacities of a \emph{finite number} of channels. This result is very useful to utilize the results in the information theory to construct privacy channels. As some examples, we construct several useful privacy channels by using it, such as the random response privacy channel, the exponential privacy channel and the Gaussian privacy channel, which as mentioned above are respective counterparts of the random response mechanism, the exponential mechanism and the Gaussian mechanism of the differential privacy model \cite{DBLP:journals/fttcs/DworkR14}.

\textbf{Outline:} The following part of this paper is organized as follows. Section \ref{section:prelim} presents the information privacy model and shows that the differential privacy model is a special case of the former. Section \ref{sec-individual-channel-capacity} introduces a result which can transform the problem of evaluating the individual channel capacity to the problem of evaluating the capacities of some channels in the information theory. In Section \ref{sec:some-channels} we show how to use the result in Section \ref{sec-individual-channel-capacity} to construct privacy channels, such as the random response privacy channel and the exponential privacy channel. In Section  \ref{section-continuous-case} we extend the result in Section \ref{sec-individual-channel-capacity} to the continuous case and then construct the Gaussian privacy channel. Section \ref{sec:illustration-dp} give an interesting illustration of the differential privacy mechanism using the probability transition matrix of privacy channel. Section \ref{section:conclusion} concludes this paper.

\section{Preliminaries} \label{section:prelim}

In this section, we restate the information privacy model introduced in \cite{wu-he-xia2018,DBLP:journals/corr/abs-1907-09311}. As mentioned in Section \ref{section-introduction}, this model is a variant of Shannon's information theoretic model of the encryption systems \cite{6769090}. 
As a comparative study, we also show that the differential privacy model is equivalent to a special kind of the information privacy model, which implies that the two models are compatible with each other. 
Most notations of this paper follow the book \cite{DBLP:books/daglib/0016881}.

\subsection{The Information Privacy Model}

In the information privacy model there are $n\ge 1$ individuals and a dataset has $n$ records, each of which is the private data of  one individual. 
Let the random variables $X_1, \ldots, X_n$ denote an adversary's probabilities/uncertainties to the $n$ records in the queried dataset. Let $\mathcal X_i$ denote the record universe of $X_i$.
A dataset $x:=(x_1,\ldots,x_n)$ is a sequence of $n$ records, where each $x_i$ is an assignment of $X_i$ and so $x_i\in \mathcal X_i$.
Let $\mathcal X = \prod_{i\in [n]} \mathcal X_i$ where $[n]=\{1, \ldots, n\}$. Let $\mathbb P$ and $\mathbb P_i$ denote the universe of probability distributions over $\mathcal X$ and $\mathcal X_i$, respectively. Let $p(x)$ and $p(x_i)$ denote the probability distribution of $X$ and $X_i$, respectively. 
Let $\Delta$ be a subset of $\mathbb P$. 
Furthermore, we sometimes abuse a capital letter, such as $X$, to either denote a random variable or denote the probability distribution which the random variable follows.  
Then if the probability distribution of the random variable $X$ is in $\Delta$, we say that $X$ is in $\Delta$, denoted as $X\in\Delta$. 
Moreover, if $X=(X_{I_1},X_{I_2})$ follows the joint probability distribution $p(x_{I_1})p(x_{I_2}|x_{I_1})$ and $X \in \Delta$, we denote $p(x_{I_1}) \in \Delta$ and $p(x_{I_2}|x_{I_1})\in \Delta$.

\begin{definition}[The Knowledge of an Adversary]
Let the random vector $X:=(X_1, \ldots, X_n)$ denote the uncertainties/probabilities of an adversary to the queried dataset. Then $X$ or its probability distribution is called the knowledge of the adversary (to the dataset).
\end{definition}


In order to achieve more data utility, the information privacy model restricts adversaries' knowledges. 
Note that, by letting all adversaries' knowledges be derived from a subset $\Delta$ of $\mathbb P$, we achieve a restriction to adversaries' knowledges.
In this paper we assume that the entropy of each adversary's knowledge is not so small, which is formalized as the following assumption.

\begin{assumption} \label{assumption-1}
Let $b$ be a positive constant. Then, for any one adversary's knowledge $X$, there must be $X\in \mathbb P_b$, where
\begin{align}
\mathbb P_b = \{X: H(X)\ge b\}
\end{align}
with $H(X)$ being the entropy of $X$. 
\end{assumption}

For a query function $f$ over $\mathcal X$, let $\mathcal Y= \{f(x): x\in \mathcal X\}$ be its range. Except in Section \ref{section-continuous-case}, this paper assumes that the sets $\mathcal X, \mathcal Y$ both are finite. We now define the privacy channel/mechanism.
\begin{definition}[Privacy Channel/Mechanism]
To the function $f:\mathcal X \rightarrow \mathcal Y$,
we define a \emph{privacy channel/mechanism}, denoted by $(\mathcal X, p(y|x), \mathcal Y)$ or $p(y|x)$ for simplicity (when there is no ambiguity), to be the probability transition matrix $p(y|x)$ that expresses the probability of observing the output symbol $y\in \mathcal Y$ given that we query the dataset $x\in \mathcal X$.  
\end{definition}

To the above privacy channel $p(y|x)$, let $Y$ be its output random variable. We now define the individual channel capacity which is used to model the largest amount of information of an individual that an adversary can obtain from the output $Y$.  

\begin{definition}[Individual Channel Capacity]
To the function $f:\mathcal X \rightarrow \mathcal Y$ and its one privacy channel $p(y|x) $, we define the individual channel capacity of $p(y|x) $ with respect to $\Delta \subseteq \mathbb P$ as
\begin{align}
C_1=\max_{i\in [n],X\in \Delta} I(X_i;Y),
\end{align}
where $X$ equals $(X_1,\ldots,X_n)$ and $I(X_i;Y)$ is the mutual information between $X_i$ and $Y$.
\end{definition}

The information privacy model protects privacy by restricting the individual channel capacity. 

\begin{definition}[Information Privacy] \label{definition-ip}
To the function $f:\mathcal X \rightarrow \mathcal Y$, we say that its one privacy channel $p(y|x) $ satisfies $\epsilon$-information privacy with respect to $\Delta$ if 
\begin{align}
C_1 \le \epsilon,
\end{align}
where $C_1$ is the individual channel capacity of $p(y|x)$ with respect to $\Delta$.
\end{definition}

If we set $\Delta=\mathbb P_b$ and assume that Assumption \ref{assumption-1} is true, then letting the channel $p(y|x)$ satisfy $\epsilon$-information privacy with respect to $\Delta$ will ensure that 
any adversary can only obtain at most $\epsilon$ bits information of each individual from the output of the channel.

We now define the (utility-privacy) balance function of a privacy channel. This function is used to express the ability of the privacy channel to preserve the privacy when adversaries' knowledges are reduced to $\mathbb P_b$ from $\mathbb P$ in order to improve the data utility. 

\begin{definition}[Balance Function]  \label{definition-2}              
To the privacy channel $p(y|x)$, let $C_1, C_1^{b}$ be its individual channel capacities with respect to $\mathbb P, \mathbb P_b$, respectively. To each fixed $b$, assume there exists a nonnegative constant $\delta$ such that
\begin{align}
C_1 =C_1^b +\delta.
\end{align}
Then we say that the privacy channel is $(b,\delta)$-(utility-privacy) balanced, and that the function 
\begin{align}
\delta=\delta(b), \hspace{1cm} b\in [0,\log |\mathcal X|]
\end{align}
is the (utility-privacy) balance function of the privacy channel.
\end{definition}

The balance function is an increasing function since, when $b\le b'$, there is $\mathbb P_b \supseteq \mathbb P_{b'}$, which implies $C_1^b \ge C_1^{b'}$ and then $\delta(b) \le \delta(b')$. 

\begin{lemma}[Monotonicity of Balance Function] \label{lemma-4}
Let $\delta=\delta(b)$ be the balance function of the privacy channel $p(y|x)$. Then this function is non-decreasing and therefore there is
\begin{align}
0=\delta(0)\le \delta(b)\le \delta(\log|\mathcal X|) < \min \left\{ b,\max_{i\in [n]}\log|\mathcal X_i| \right\}.
\end{align}
\end{lemma}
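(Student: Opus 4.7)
The approach is to first establish the monotonicity of $\delta(b)$, from which the whole chain of inequalities follows with minimal additional work. The core observation is that the feasible set $\mathbb P_b = \{X : H(X) \ge b\}$ shrinks as $b$ grows: if $b \le b'$ then $\mathbb P_{b'} \subseteq \mathbb P_b$, simply because the entropy constraint $H(X) \ge b'$ is the strictly stronger one. Since the individual channel capacity $C_1^b = \max_{i \in [n],\, X \in \mathbb P_b} I(X_i; Y)$ is a supremum over $\mathbb P_b$ (with the outer max over $i$), enlarging the feasible set can only increase it, so $b \le b'$ implies $C_1^b \ge C_1^{b'}$. Because $\delta(b) = C_1 - C_1^b$ and $C_1$ is independent of $b$, this immediately gives $\delta(b) \le \delta(b')$. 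This is exactly the argument already sketched in the paragraph just before the lemma, and I would simply write it out cleanly as step one.

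The endpoints of the chain then fall out with essentially no extra work. On the left, $\mathbb P_0 = \mathbb P$ because every probability distribution has non-negative entropy, so $C_1^0 = C_1$ and $\delta(0) = 0$. On the right, the uniform distribution over $\mathcal X$ attains the maximum possible entropy $\log |\mathcal X|$, so $\mathbb P_{\log |\mathcal X|}$ is non-empty (it is in fact the singleton containing just that uniform distribution), and monotonicity directly yields $\delta(b) \le \delta(\log |\mathcal X|)$ for all $b \in [0, \log |\mathcal X|]$.

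For the final upper bound I would chain three routine inequalities: $C_1^{\log |\mathcal X|} \ge 0$ gives $\delta(\log |\mathcal X|) \le C_1$; the standard information-theoretic bound $I(X_i; Y) \le H(X_i) \le \log |\mathcal X_i|$ gives $C_1 \le \max_{i} \log |\mathcal X_i|$; and $\max_i \log |\mathcal X_i| \le \sum_i \log |\mathcal X_i| = \log |\mathcal X|$, so the minimum on the right collapses to $\max_i \log |\mathcal X_i|$. Chaining these yields the non-strict bound $\delta(\log |\mathcal X|) \le \min\{\log |\mathcal X|,\, \max_i \log |\mathcal X_i|\}$.

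The main obstacle is the strict version of this final inequality, which the chain above does not give. Strictness would require ruling out simultaneously $C_1 = \max_i \log |\mathcal X_i|$ and $C_1^{\log |\mathcal X|} = 0$, and this combination appears to be achievable in general: for the XOR channel $Y = X_1 \oplus X_2$ on $\mathcal X = \{0,1\}^2$, one checks that $C_1 = 1$ (take $X_2$ deterministic and $X_1$ uniform, so that $Y = X_1$), while $C_1^{\log |\mathcal X|} = 0$ (under the uniform distribution on $\mathcal X$ the output $Y$ is independent of each $X_i$), giving $\delta(\log |\mathcal X|) = 1 = \max_i \log |\mathcal X_i|$ with equality. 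My plan would therefore be to prove the non-strict form of the bound exactly as outlined and then flag the strict inequality as either needing an additional non-triviality hypothesis on $p(y|x)$, or as being a typographical slip for $\le$.
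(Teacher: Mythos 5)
Your monotonicity argument is exactly the one the paper gives: the paragraph immediately preceding the lemma is the paper's entire justification, and it consists precisely of the observation that $b\le b'$ implies $\mathbb P_{b'}\subseteq\mathbb P_b$, hence $C_1^b\ge C_1^{b'}$ and $\delta(b)\le\delta(b')$. The paper offers no argument at all for the endpoint identities or for the final bound, so everything after your first paragraph is material you supply beyond the paper. Your endpoint computations are correct: $\mathbb P_0=\mathbb P$ because entropy is nonnegative, giving $\delta(0)=0$; $\mathbb P_{\log|\mathcal X|}$ is the singleton containing the uniform distribution, so $\delta(b)\le\delta(\log|\mathcal X|)$ follows from monotonicity; and the chain $\delta(\log|\mathcal X|)\le C_1\le\max_i H(X_i)\le\max_i\log|\mathcal X_i|\le\log|\mathcal X|$ soundly yields the non-strict version of the last bound.

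Your doubt about the strict inequality is well founded. The XOR example is a genuine counterexample provided deterministic transition matrices are admitted as privacy channels, and the paper's definition imposes no non-degeneracy condition on $p(y|x)$, so they are: with $Y=X_1\oplus X_2$ one gets $C_1=\log 2$ by putting a point mass on $X_2$, while the unique member of $\mathbb P_{\log 4}$ makes $Y$ independent of each $X_i$, so $\delta(\log 4)=\log 2=\max_i\log|\mathcal X_i|$ and strictness fails. (Note the example does depend on the channel being noiseless; adding any symmetric noise restores strictness there.) The displayed chain is in fact already internally inconsistent as literally written, since $b$ is a free variable inside the $\min$: taking $b=0$ would force $\delta(\log|\mathcal X|)<0$, contradicting $\delta\ge 0$. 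So the last inequality must be read either as non-strict or as a bound on $\delta(b)$ rather than on $\delta(\log|\mathcal X|)$; your plan to prove the non-strict form and flag the rest is the right call. The one loose end in your write-up is the $b$-term of the minimum: if the intended claim is $\delta(b)\le b$ for every $b$, that follows neither from your argument nor from anything in the paper and would require a separate proof.
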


Clearly, the more larger $b$ is means that the more weaker adversaries are, which then implies the more larger data utility and the more larger $\delta$. Therefore, the parameter $b$ can be considered as an indicator of the data utility but the parameter $\delta$ can be considered as an indicator of the amount of the private information lost when the data utility has $b$ amount of increment.

\begin{lemma}[Theorem 1 in \cite{DBLP:journals/corr/abs-1907-09311}] \label{lemma-5}
Let $\delta=\delta(b)$ be the balance function of the privacy channel $p(y|x)$.
Then the privacy channel $p(y|x)$ satisfies $\epsilon$-information privacy with respect to $\mathbb P_b$ if and only if it satisfies $(\epsilon+\delta)$-information privacy with respect to $\mathbb P$.
\end{lemma}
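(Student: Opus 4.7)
The plan is to derive the equivalence by unfolding the two privacy conditions into inequalities on the appropriate individual channel capacities and then using the balance function identity as a direct substitution. Specifically, I would first rewrite the left-hand side: by Definition~\ref{definition-ip}, saying that $p(y|x)$ satisfies $\epsilon$-information privacy with respect to $\mathbb P_b$ is, by the very definition, equivalent to the numerical inequality $C_1^b \le \epsilon$, where $C_1^b$ is the individual channel capacity of $p(y|x)$ with respect to $\mathbb P_b$. Analogously, $(\epsilon+\delta)$-information privacy with respect to $\mathbb P$ is equivalent to $C_1 \le \epsilon+\delta$, where $C_1$ is the individual channel capacity with respect to all of $\mathbb P$.

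Next, I would invoke Definition~\ref{definition-2}, which postulates the exact equality
\begin{align}
C_1 = C_1^b + \delta
\end{align}
for the balance function $\delta=\delta(b)$. Substituting this into the inequality $C_1 \le \epsilon+\delta$ yields $C_1^b + \delta \le \epsilon + \delta$, and cancelling $\delta$ on both sides (which is justified since $\delta$ is a finite nonnegative constant by Lemma~\ref{lemma-4}) gives $C_1^b \le \epsilon$. The reverse implication is obtained by reversing this one-line algebraic manipulation, which establishes the ``if and only if''.

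In short, the lemma is a direct consequence of the definitions once one observes that the balance function is defined precisely so as to record the additive gap between $C_1$ and $C_1^b$. There is no substantive obstacle here beyond making sure to cite the two relevant definitions and to note that $\delta$ is a constant depending only on $b$ (not on $\epsilon$), so that cancellation across both sides of the inequality is legitimate; this is guaranteed by Definition~\ref{definition-2} and the bounds in Lemma~\ref{lemma-4}.
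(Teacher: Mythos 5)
Your proof is correct and is essentially the canonical argument: the paper itself does not reprove this lemma (it cites Theorem 1 of the referenced work), but the statement follows exactly as you say, by unfolding Definition~\ref{definition-ip} into $C_1^b\le\epsilon$ and $C_1\le\epsilon+\delta$ and substituting the identity $C_1=C_1^b+\delta$ from Definition~\ref{definition-2}. Your remark that $\delta$ is a finite constant depending only on $b$ (so the cancellation is legitimate) is the right point to flag, and nothing further is needed.
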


Lemma \ref{lemma-5} shows that, to construct an $\epsilon$-information privacy channel with respect to $\mathbb P_b$, we only need to construct an $(\epsilon+\delta)$-information privacy channel with respect to $\mathbb P$, where $\delta=\delta(b)$ is the balance function of the channel. 
However, we find that it is very hard to evaluate the balance functions of privacy channels, which is left as an open problem.
Of course, as explained in \cite{DBLP:journals/corr/abs-1907-09311}, even though we don't know the balance functions, this doesn't mean that we can't use them. Specifically, we can first set the needed value of $\delta$ (but not $b$), then there must \emph{exist} a largest $b=b'$ such that $\delta(b') \le \delta$ by the monotonicity of balance functions (even though we can't find it). 
Then we can freely use Lemma \ref{lemma-5} to construct $\epsilon$-information privacy channels with respect to $\mathbb P_{b'}$ from the $(\epsilon+\delta)$-information privacy channels with respect to $\mathbb P$. The only drawback of this way is that we don't know the value of $b'$ and then can't estimate the extent of reasonability of Assumption \ref{assumption-1} for $b=b'$.

\subsection{The Differential Privacy Model} \label{subsec-dp-model}

We stress that the differential privacy model is compatible with the information privacy model. We will define a special case of the information privacy models and then show that it is equivalent to the differential privacy model. 
We first define the max-mutual information, which is the maximum of $\log\frac{p(y|x)}{p(y)}$. Note that the mutual information $I(X;Y)$ is the expected value of $\log\frac{p(y|x)}{p(y)}$.  

\begin{definition}[Max-Mutual Information \cite{DBLP:conf/nips/DworkFHPRR15,DBLP:conf/focs/RogersRST16}] \label{definition-max-MI}
The max-mutual information of the random variables $X, Y$ is defined as
\begin{align}
I_{\infty}(X;Y)= \max_{x\in \mathcal X , y\in \mathcal Y}\log\frac{p(y|x)}{p(y)}.
\end{align}
\end{definition}

Similarly, we can define the max-individual channel capacity.

\begin{definition}[Max-Individual Channel Capacity]
To the function $f:\mathcal X \rightarrow \mathcal Y$ and its one privacy channel $p(y|x) $, we define the max-individual channel capacity of $p(y|x) $ with respect to $\Delta \subseteq \mathbb P$ as
\begin{align}
C_1^{\infty}=\max_{i\in [n],X\in \Delta}  I_{\infty}(X_i;Y),
\end{align}
where $X=(X_1,\ldots,X_n)$.
\end{definition}

We now present the max-information privacy which restricts the max-individual channel capacity, and in which the adversaries' knowledges are restricted to the independent knowledges to each records in the queried dataset. We will also give the definition of differential privacy and then show that they are equivalent to each other. 

\begin{definition}[Max-Information Privacy]
A privacy channel $p(y|x)$ satisfies $\epsilon$-max-information privacy with respect to $\mathbb P_{ind}$ if it satisfies 
\begin{align}
C_1^{\infty} \le \epsilon,
\end{align}
where $C_1^{\infty}$ is the max-invididual channel capacity of the privacy channel $p(y|x)$ with respect to $\mathbb P_{ind}$, where 
\begin{align}
\mathbb P_{ind} = \left\{X\in \mathbb P: p(x) = \prod_{i=1}^np(x_i), \forall x=(x_1, \ldots, x_n)\in \mathcal X\right\},
\end{align}
with $X \sim p(x)$ and $X_i \sim p(x_i)$. Note that $\mathbb P_{ind}$ is the universe of $X$ such that $X_1, \ldots, X_n$ are independent to each other. 
\end{definition}


\begin{definition}[Differential Privacy \cite{DBLP:conf/tcc/DworkMNS06,DBLP:journals/fttcs/DworkR14}] \label{definition-1}
A privacy channel $p(y|x)$ satisfies $\epsilon$-differential privacy if, for any $y\in \mathcal Y$ and any $x,x'\in \mathcal X$ with $\|x-x'\|_1\le 1$, there is
\begin{align}
\log \frac{p(y|x)}{ p(y|x')} \le \epsilon.
\end{align}
\end{definition}

The following lemma shows that the above two apparently different definitions, in fact, are equal. 

\begin{proposition} \label{proposition-1}
The privacy channel $p(y|x)$ satisfies $\epsilon$-max-information privacy with respect to $\mathbb P_{ind}$ if and only if it satisfies $\epsilon$-differential privacy. 
\end{proposition}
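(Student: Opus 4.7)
The plan is to prove the two implications separately, exploiting the fact that $\mathbb P_{ind}$ consists exactly of product distributions so that $I_\infty(X_i;Y)$ unfolds into a clean ratio of marginalized channel probabilities.

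For the forward direction ($\epsilon$-differential privacy implies $\epsilon$-max-information privacy with respect to $\mathbb P_{ind}$), I would fix an arbitrary product distribution $X\in \mathbb P_{ind}$, an index $i\in[n]$, a value $x_i$ in the support of $X_i$, and an output $y\in \mathcal Y$. Independence of the coordinates lets me write $p(y\mid x_i)=\sum_{x_{-i}} p(y\mid x_i,x_{-i})\prod_{j\ne i}p(x_j)$ and $p(y)=\sum_{x_i'}p(y\mid x_i')p(x_i')$. For any $x_i'$, the datasets $(x_i,x_{-i})$ and $(x_i',x_{-i})$ differ in a single record, so Definition \ref{definition-1} gives $p(y\mid x_i,x_{-i})\le e^\epsilon p(y\mid x_i',x_{-i})$ termwise; averaging against the product of the remaining marginals yields $p(y\mid x_i)\le e^\epsilon p(y\mid x_i')$, and a second average over $x_i'$ weighted by $p(x_i')$ gives $p(y\mid x_i)\le e^\epsilon p(y)$, i.e.\ $I_\infty(X_i;Y)\le \epsilon$.

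For the reverse direction, the idea is to force a general neighboring pair $x,x'$ (differing in coordinate $i$) to appear inside an $I_\infty$ computation by selecting a witness in $\mathbb P_{ind}$ that concentrates away from that pair. I would take $X_j$ to be the point mass at $x_j=x_j'$ for $j\ne i$ and let $X_i$ assign mass $\alpha$ to $x_i$ and $1-\alpha$ to $x_i'$ for some $\alpha\in(0,e^{-\epsilon})$; this $X$ lies in $\mathbb P_{ind}$, satisfies $p(y\mid X_i=x_i)=p(y\mid x)$, and $p(y)=\alpha p(y\mid x)+(1-\alpha)p(y\mid x')$. The max-information privacy hypothesis then reads $p(y\mid x)\le e^\epsilon[\alpha p(y\mid x)+(1-\alpha)p(y\mid x')]$; rearranging gives $p(y\mid x)\le \frac{(1-\alpha)e^\epsilon}{1-\alpha e^\epsilon}p(y\mid x')$, and letting $\alpha\to 0^+$ delivers the differential privacy bound $p(y\mid x)\le e^\epsilon p(y\mid x')$.

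The main obstacle I anticipate is the backward direction's edge case $p(y\mid x')=0$ together with $p(y\mid x)>0$: the two-point witness then forces $p(y\mid X_i=x_i)/p(y)=1/\alpha$, so picking any $\alpha$ below $e^{-\epsilon}$ would already violate $I_\infty\le \epsilon$; hence such a $y$ cannot exist under the hypothesis and the limit above is well-defined throughout. Once this subtlety is settled, the remaining work is just bookkeeping around product-distribution averaging and one algebraic rearrangement, so the equivalence follows.
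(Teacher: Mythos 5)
Your proof is correct, and it is genuinely more informative than what the paper offers: the paper disposes of Proposition~\ref{proposition-1} with a one-line citation to Theorem~1 of \cite{wu-he-xia2018}, whereas you give a complete, self-contained argument. Both directions check out. In the forward direction, the factorization $p(y\mid x_i)=\sum_{x_{(i)}}p(y\mid x_i,x_{(i)})\prod_{j\ne i}p(x_j)$ is exactly what independence buys, the termwise application of Definition~\ref{definition-1} to the pair $(x_i,x_{(i)})$, $(x_i',x_{(i)})$ is legitimate, and the two successive averagings give $p(y\mid x_i)\le e^{\epsilon}p(y)$. In the reverse direction, the two-point witness with mass $\alpha$ on $x_i$ and point masses elsewhere is the right choice: it lies in $\mathbb P_{ind}$, collapses $p(y\mid X_i=x_i)$ to $p(y\mid x)$ and $p(y)$ to the stated mixture, and the limit $\alpha\to 0^{+}$ recovers the differential privacy ratio; your handling of the degenerate case $p(y\mid x')=0$, $p(y\mid x)>0$ (where the hypothesis is already violated for small $\alpha$, so the case is vacuous) closes the only real gap. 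Two small remarks. First, you silently read the condition $\|x-x'\|_1\le 1$ in Definition~\ref{definition-1} as ``differ in one coordinate'' (Hamming distance one); that is the reading the equivalence actually requires, and it matches the standard convention, but it is worth stating since a literal $\ell_1$ reading on numeric records would constrain fewer pairs. Second, in the forward direction you restrict to $x_i$ in the support of $X_i$; since the maximum in $I_{\infty}(X_i;Y)$ is taken over $\mathcal X_i$, you should note that for a product distribution $p(x_{(i)}\mid x_i)=\prod_{j\ne i}p(x_j)$ holds for every $x_i$ (the conditional is unambiguous there), so the same bound extends to zero-probability values of $x_i$. With those clarifications your argument stands as a valid elementary proof that could replace the external citation.
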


\begin{proof} 
This is a direct corollary of Theorem 1 in \cite{wu-he-xia2018}. 
\qed
\end{proof}

Proposition \ref{proposition-1} has two implications. First, the differential privacy model is founded on the independence assumption, i.e. founded on the assumption that $X \in \mathbb P_{ind}$ for all possible $X$, which is too strong to be satisfied than Assumption \ref{assumption-1}. Therefore, the information privacy model with respect to $\mathbb P_b$ is more reasonable than the differential privacy model. Second, the information privacy model is compatible with the differential privacy model. This gives us large facility to study the information privacy model by comparing it with the differential privacy model.

\section{From Individual Channel Capacity to Channel Capacity} \label{sec-individual-channel-capacity}

The preceding section shows that the information privacy model is compatible with the differential privacy model. Then a natural question is: \emph{Are the Laplace channel, the Gaussian channel and the exponential channel also applicable in the information privacy model? } Before answering this question, we first should be able to evaluate the individual channel capacities of the corresponding privacy channels, which are new problems both in the data privacy protection and in the information theory.  
In this section we will develop approaches to solve these problems. 


\subsection{A Simple Example} \label{subsec-simple-example}

It is very complicated to evaluate the individual channel capacity of a privacy channel. Before given the general results, a simple example is suitable to explain our ideas. 

\begin{example} \label{example-1}
Set $\mathcal X_1=\{0,1,2\}$, $\mathcal X_2=\{0,1\}$ and $\mathcal X=\mathcal X_1 \times \mathcal X_2$. The function $f:\mathcal X \rightarrow \mathcal Y$ is defined as
\begin{align}
f(x_1,x_2) =  
	\begin{cases} 
		1 &  x_1= x_2 \\
		0 & \mbox{otherwise}
	\end{cases}
\end{align}
for $(x_1,x_2) \in \mathcal X$. Then $\mathcal Y=\{0,1\}$. The probability transition matrix of the privacy channel $p(y|x)$ is shown in Fig. \ref{figure-1},
\begin{figure}
\centering
\begin{tikzpicture}[every node/.style={anchor=north east,fill=white,minimum width=1.4cm,minimum height=7mm}]
\matrix (mA) [draw,matrix of math nodes]
{
p(y_2|x_{11},x_{22}) & (y_2|p(x_{12},x_{22})  & p(y_2|x_{13},x_{22})  \\
p(y_2|x_{11},x_{21}) & (y_2|p(x_{12},x_{21})  & p(y_2|x_{13},x_{21})  \\
};

\matrix (mB) [draw,matrix of math nodes] at ($(mA.south west)+(3.5,-0.5)$)
{
p(y_1|x_{11},x_{22}) & (y_1|p(x_{12},x_{22})  & p(y_1|x_{13},x_{22})  \\
p(y_1|x_{11},x_{21}) & (y_1|p(x_{12},x_{21})  & p(y_1|x_{13},x_{21})  \\
};

\draw[-](mA.north east)--(mB.north east);
\draw[-](mA.north west)--(mB.north west);
\draw[-](mA.south east)--(mB.south east);
\draw[->]($(mB.south west)-(0.3,0.3)$)--($(mB.south east)-(0.3,0.3)$)node [near end, below right] {$x_1$};
\draw[->]($(mB.south west)-(0.3,0.3)$)--($(mB.north west)-(0.3,0.3)$)node [near end, below left] {$x_2$};
\draw[->]($(mB.north west)-(0.3,-0.3)$)--($(mA.north west)-(0.3,-0.3)$)node [near end, above left] {$y$};
\end{tikzpicture}
\caption{The probability transition matrix of the privacy channel $p(y|x)$ in Example \ref{example-1}, which is a three-dimensional matrix} \label{figure-1}
\end{figure}
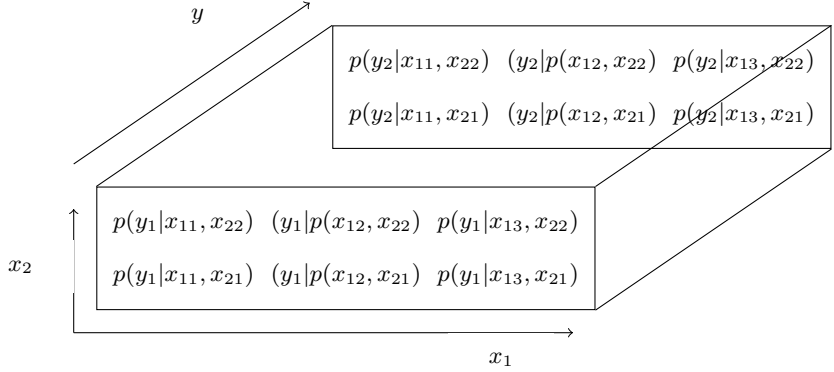
where $(x_{11},x_{12},x_{13}) = (0,1,2)$, $(x_{21},x_{22}) = (0,1)$ and $(y_1,y_2) =(0,1)$.  
\end{example} 

The individual channel capacity of the privacy channel $p(y|x)$ can be simplified as
\begin{align}
C_1=\max_{i\in [n],X\in \mathbb P} I(X_i;Y) =\max_{i\in [n]} \max_{p(y|x_i) \in \mathbb Q_i} \max_{p(x_i) \in \mathbb P_i}  I(X_i;Y),
\end{align}
where $\mathbb P_i$ denotes the universe of the probability distributions of $X_i$, and $\mathbb Q_i$ denotes the universe of the probability transition matrices $p(y|x_i)$ such that $X\in \mathbb P$. Then the main problem becomes to evaluate what is the set $\mathbb Q_i$. 
Recall that $p(y|x_i) =\sum_{x_{(i)}} p(y|x_i,x_{(i)})p(x_{(i)}|x_i)$ for $(y,x_i) \in \mathcal Y \times \mathcal X_i$, where $(i) = 2$ if $i=1$ and $(i)=1$ if $i=2$. 
We first consider the case of $i=1$. There is 
\begin{align} \label{equation-16}
\begin{pmatrix}
p(y_1|x_1) \\
p(y_2|x_1)
\end{pmatrix}
= 
\begin{pmatrix}
p(y_1|x_1,x_{21}) & p(y_1|x_1, x_{22}) \\
p(y_2|x_1,x_{21}) & p(y_2|x_1, x_{22}) 
\end{pmatrix}
\begin{pmatrix}
p(x_{21}|x_1) \\
p(x_{22}|x_1)
\end{pmatrix}
\end{align}
for $x_1 \in \mathcal X_1=\{x_{11},x_{12},x_{13}\}$.  Note that every column vector in the above equation is a probability distribution. Let $A \in \mathbb Q_1$ and assume
\begin{align} \label{equation-17}
A=
\begin{pmatrix}
p(y_1|x_{11}) & p(y_1|x_{12})  & p(y_1|x_{13})\\
p(y_2|x_{11}) & p(y_2|x_{12})  & p(y_2|x_{13})
\end{pmatrix}
\end{align}
Then the $j$th column vector of $A$ is a convex combination of the column vectors of the first matrix in the right side of the equation (\ref{equation-16}) when setting $x_1=x_{1j}$, $j=\{1,2,3\}$. By combining the above result with the fact that the first matrix in the right side of the equation (\ref{equation-16}),  when setting $x_1=x_{1j}$, is the $j$th plane, i.e. a two-dimensional matrix, in Fig. \ref{figure-1} that is parallel to the $x_2$-$y$ plane,  we conclude that \emph{the $j$th column vector of the matrix $A \in \mathbb Q_1$ is a convex combination of the column vectors of the $j$th plane paralleled to the $x_2$-$y$ plane in Fig. \ref{figure-1}}. 
Furthermore, since $X\in \mathbb P$, then the probability distribution $(p(x_{21}|x_1), p(x_{22}|x_1) )$ over $\mathcal X_2$ can be any possible binary distribution, which implies that the coefficients of the convex combination can be any binary convex coefficients. Therefore, \emph{the set $\mathbb Q_1$ contains and only contains those matrices $A$, whose $j$th column vector is a convex combination of the column vectors of the $j$th plane paralleled to the $x_2$-$y$ plane.} Specifically, we have
\begin{align}
\mathbb Q_1 = \{ ( \mathcal X_1, p(y|x_1), \mathcal Y):  p(x_{2}|x_1) \in \mathbb P\},
\end{align}
where $ p(x_{2}|x_1) \in \mathbb P$ means there exists $X \in \mathbb P$ such that $X \sim p(x_1) p(x_{2}|x_1)$. 
Similarly,  \emph{the set $\mathbb Q_2$ contains and only contains those matrices $A$, whose $j$th column vector is a convex combination of the column vectors of the $j$th plane paralleled to the $x_1$-$y$ plane.} That is, 
\begin{align}
\mathbb Q_2 = \{ ( \mathcal X_2, p(y|x_2), \mathcal Y):  p(x_{1}|x_2) \in \mathbb P\}, 
\end{align}
where $ p(x_{1}|x_2) \in \mathbb P$ means there exists $X \in \mathbb P$ such that $X \sim p(x_2) p(x_{1}|x_2)$. 

The above discussions tell us that each set $\mathbb Q_i$ is infinite and it is, therefore, not a reasonable way to evaluate $C_1$ by evaluating the channel capacity of each channel in the sets $\mathbb Q_i, i\in [n]$. Now we construct a new set $\mathbb S_i$, which is finite and is a small subset of $\mathbb Q_i$,  and show that it has the same maximal channel capacity as $\mathbb Q_i$. Let the subset $\mathbb S_1$ ($\mathbb S_2$) of $\mathbb Q_1$ ($\mathbb Q_2$) contains and only contains those matrices whose $j$th column vector is one column vector of the $j$th plane paralleled to the $x_2$-$y$ ($x_1$-$y$) plane.
Formally, set
\begin{align}
\mathbb S_1
=& \left\{( \mathcal X_1, p(y|x_1), \mathcal Y): p(x_{2}|x_1) \in \mathbb P_{2}^{dg} \right\} \\
=& \left\{( \mathcal X_1, p(y|x_1), \mathcal Y): \mbox{for each } x_1 \in \mathcal X_1, \exists x_{2}\in \mathcal X_2 \mbox{ such that } \forall y, p(y|x_1)=p(y|x_1,x_{2}) \right\}
\end{align}
and
\begin{align}
\mathbb S_2
=& \left\{( \mathcal X_2, p(y|x_{2}), \mathcal Y): p(x_{1}|x_2) \in \mathbb P_{1}^{dg} \right\}\\
=& \left\{( \mathcal X_2, p(y|x_2), \mathcal Y): \mbox{for each } x_2\in \mathcal X_2, \exists x_{1}\in \mathcal X_1 \mbox{ such that } \forall y, p(y|x_2)=p(y|x_1,x_{2}) \right\},
\end{align}
where $\mathbb P_{2}^{dg}$ is the universe of the probability transition matrix $p(x_{2}|x_1)$ whose each column vector is one degenerate probability distribution over $\mathcal X_2$, and where $\mathbb P_{1}^{dg}$ is the universe of the probability transition matrix $p(x_{1}|x_2)$ whose each column vector is one degenerate probability distribution over $\mathcal X_1$.
Then the $j$th column vector of a matrix in $ \mathbb Q_i$ is a convex combination of those $j$th column vectors of the matrices in $\mathbb S_i$. 
Therefore, we have 
\begin{align} \label{equation-28}
C_1=\max_{i\in [n],X\in \mathbb P} I(X_i;Y) =\max_{i\in [n]} \max_{p(y|x_i) \in \mathbb Q_i} \max_{p(x_i) \in \mathbb P_i}  I(X_i;Y)= \max_{i\in [n]} \max_{p(y|x_i) \in \mathbb S_i} \max_{p(x_i) \in \mathbb P_i}  I(X_i;Y)
\end{align}
by the convexity of the mutual information in Lemma \ref{lemma-6}.

\begin{lemma}[Theorem 2.7.4 in \cite{DBLP:books/daglib/0016881}] \label{lemma-6}
Let $(X,Y) \sim p(x,y) = p(x)p(y|x)$. Then the mutual information $I(X;Y)$ is a convex function of $p(y|x)$ for fixed $p(x)$. 
\end{lemma}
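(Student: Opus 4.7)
The plan is to reduce the statement to the joint convexity of the Kullback--Leibler divergence. First I would rewrite the mutual information as a relative entropy,
\[
I(X;Y) = D(p(x,y) \,\|\, p(x) p(y)),
\]
where $p(x,y) = p(x) p(y|x)$ and $p(y) = \sum_{x'} p(x') p(y|x')$. The key observation is that when $p(x)$ is held fixed, both arguments of the divergence are \emph{affine} in the conditional $p(y|x)$: indeed, if $p_\lambda(y|x) = \lambda p_1(y|x) + (1-\lambda) p_2(y|x)$ for $\lambda\in[0,1]$, then the induced joint distribution and the induced product-of-marginals distribution are, respectively, the corresponding convex combinations of those produced by $p_1(y|x)$ and $p_2(y|x)$ alone.

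Next I would invoke the joint convexity of the KL divergence: for any two pairs of distributions $(\mu_1,\nu_1)$, $(\mu_2,\nu_2)$ on the same alphabet and any $\lambda\in[0,1]$,
\[
D(\lambda \mu_1 + (1-\lambda)\mu_2 \,\|\, \lambda\nu_1 + (1-\lambda)\nu_2) \le \lambda D(\mu_1\,\|\,\nu_1) + (1-\lambda) D(\mu_2\,\|\,\nu_2).
\]
Applying this with $\mu_i$ the joint distribution and $\nu_i$ the product-of-marginals distribution induced by $p_i(y|x)$ delivers precisely the convexity of $I(X;Y)$ as a function of $p(y|x)$, since convexity is preserved under composition with an affine map.

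The only non-trivial ingredient is the joint convexity of the KL divergence itself. I would derive it from the log-sum inequality $\sum_i a_i \log(a_i/b_i) \ge (\sum_i a_i) \log((\sum_i a_i)/(\sum_i b_i))$ applied pointwise: for each $(x,y)$, take the two-term sums $a_i \in \{\lambda\mu_1(x,y),\,(1-\lambda)\mu_2(x,y)\}$ and $b_i \in \{\lambda\nu_1(x,y),\,(1-\lambda)\nu_2(x,y)\}$ and then sum over $(x,y)$. Since the log-sum inequality itself is an immediate consequence of the convexity of $t \mapsto t\log t$ on $[0,\infty)$ together with Jensen's inequality, the entire argument is short and presents no real obstacle; it is precisely why the statement appears as a textbook result in the cited Cover--Thomas reference.
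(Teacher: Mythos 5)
Your proof is correct and is essentially the argument the paper relies on: the paper gives no proof of its own but cites Theorem 2.7.4 of Cover--Thomas, whose proof is exactly your route (write $I(X;Y)$ as a relative entropy, observe that both arguments are affine in $p(y|x)$ for fixed $p(x)$, and invoke the joint convexity of the Kullback--Leibler divergence via the log-sum inequality). No gaps.
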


\subsection{The General Results} \label{subsec-general-result}

Now it's time to give the generalized version of the results in Section \ref{subsec-simple-example}. Set $\mathcal X_{(i)} = \prod_{j \ne i} \mathcal X_j$, $X_{(i)} = (X_1, \ldots, X_{i-1}, X_{i+1}, \ldots, X_n)$, and  $x_{(i)} = (x_1, \ldots, x_{i-1}, x_{i+1}, \ldots, x_n)$.
Let $p(y|x)$ be one privacy channel of $f:\mathcal X \rightarrow \mathcal Y$. Then the probability transition matrix of $p(y|x)$ is an $(n+1)$-dimensional matrix, one of whose three-dimensional special case is shown in Fig. \ref{figure-1}. 
By the equation $p(y|x_i) = \sum_{x_{(i)}} p(y|x_1,x_{(i)})p(x_{(i)}|x_i)$, to each $x_i\in \mathcal X_i = \{x_{i1}, \ldots, x_{i|\mathcal X_i|}\}$, we have 
\begin{align} \label{equation-8}
\begin{pmatrix}
p(y_1|x_i)\\
p(y_2|x_i) \\
\vdots \\
p(y_m|x_i)
\end{pmatrix} 
=
\begin{pmatrix}
p(y_1|x_{(i)1},x_i)           & p(y_1|x_{(i)2},x_i) & \cdots    & p(y_1|x_{(i)|\mathcal X_{(i)}|},x_i) \\
p(y_2|x_{(i)1},x_i)           & p(y_2|x_{(i)2},x_i) & \cdots    & p(y_2|x_{(i)|\mathcal X_{(i)}|},x_i) \\
\vdots                               & \vdots                      &\ddots     &         \vdots                                           \\
p(y_m|x_{(i)1},x_i)           & p(y_m|x_{(i)2},x_i) & \cdots    & p(y_m|x_{(i)|\mathcal X_{(i)}|},x_i) 
\end{pmatrix}
\begin{pmatrix}
p(x_{(i)1}|x_i) \\
p(x_{(i)2}|x_i)  \\
\vdots \\
p(x_{(i)|\mathcal X_{(i)}|}|x_i) 
\end{pmatrix}
\end{align}
which is a generalization of the equation (\ref{equation-16}), 
where $\mathcal X_{(i)} = \{ x_{(i)1},x_{(i)2}, \ldots, x_{(i)|\mathcal X_{(i)}|} \}$. 
Leting $A \in \mathbb Q_i$, then $A$ is of the following form
\begin{align} \label{equation-9}
A=
\begin{pmatrix}
p(y_1|x_{i1}) & p(y_1|x_{i2})  & \cdots & p(y_1|x_{i|\mathcal X_i|}) \\
p(y_2|x_{i1}) & p(y_2|x_{i2})  & \cdots & p(y_2|x_{i|\mathcal X_i|}) \\
\vdots                               & \vdots                      &\ddots     &         \vdots   \\
p(y_m|x_{i1}) & p(y_m|x_{i2})  & \cdots & p(y_m|x_{i|\mathcal X_i|}) 
\end{pmatrix}
\end{align}
which is a generalization of (\ref{equation-17}).  So, the $j$th column vector of $A $ is a convex combination of the column vectors of the $j$th hyperplane paralleled to the $x_1$-$\ldots$-$x_{i-1}$-$x_{i+1}$-$\ldots$-$x_{n}$-$y$ hyperplane.\footnote{The $x_1$-$\ldots$-$x_{i-1}$-$x_{i+1}$-$\ldots$-$x_{n}$-$y$ hyperplane is the hyperplane extended by the $x_1$, \ldots, $x_{i-1}$, $x_{i+1}$, \ldots, $x_{n}$ and $y$  coordinate axises. This hyperplane is the generalization of the $x_2$-$y$ plane in Fig. \ref{figure-1}.} Then \emph{the set $\mathbb Q_i$ contains and only contains those matrices $A$, whose $j$th column vector is a convex combination of the column vectors of the 
$j$th hyperplane paralleled to the $x_1$-$\ldots$-$x_{i-1}$-$x_{i+1}$-$\ldots$-$x_{n}$-$y$ hyperplane.} More formally, we have
\begin{align} \label{equation-24}
\mathbb Q_i = \{ ( \mathcal X_i, p(y|x_i), \mathcal Y):  p(x_{(i)}|x_i) \in \mathbb P\},
\end{align}
where $ p(x_{(i)}|x_i) \in \mathbb P$ means there exists $X \in \mathbb P$ such that $X \sim p(x_i) p(x_{(i)}|x_i)$. 
Similarly, we define that \emph{the set $\mathbb S_i$ contains and only contains those matrix $A$, whose $j$th column vector is one column vector of the 
$j$th hyperplane paralleled to the $x_1$-$\ldots$-$x_{i-1}$-$x_{i+1}$-$\ldots$-$x_{n}$-$y$ hyperplane.} Formally, we define
\begin{align} 
\mathbb S_i 
\label{equation-25}
=& \left\{( \mathcal X_i, p(y|x_i), \mathcal Y): p(x_{(i)}|x_i) \in \mathbb P_{(i)}^{dg} \right\}\\
=& \left\{( \mathcal X_i, p(y|x_i), \mathcal Y): \mbox{for each } x_i, \exists x_{(i)} \mbox{ such that } \forall y, p(y|x_i)=p(y|x_i,x_{(i)}) \right\},
\end{align}
where $\mathbb P_{(i)}^{dg}$ is the universe of the probability transition matrix $p(x_{(i)}|x_i)$ whose each column vector is one degenerate probability distribution over $\mathcal X_{(i)}$.
Combining Lemma \ref{lemma-6} with the above results, we then have the following theorem.

\begin{theorem} \label{theorem-1}
Let $\mathbb Q_i$ and $\mathbb S_i$ be as shown in (\ref{equation-24}) and (\ref{equation-25}), respectively. Then $\mathbb S_i \subseteq \mathbb Q_i$ and 
each channel in $\mathbb Q_i$ is one convex combination of the channels in $\mathbb S_i$. Therefore, there is
\begin{align}
C_1=\max_{i\in [n],X\in \mathbb P} I(X_i;Y) =\max_{i\in [n]} \max_{p(y|x_i) \in \mathbb Q_i} \max_{p(x_i) \in \mathbb P_i}  I(X_i;Y)= \max_{i\in [n]} \max_{p(y|x_i) \in \mathbb S_i} \max_{p(x_i) \in \mathbb P_i}  I(X_i;Y),
\end{align}
where $C_1$ is the individual channel capacity of the privacy channel $p(y|x)$ with respect to $\mathbb P$.
\end{theorem}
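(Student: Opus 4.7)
The plan is to verify the theorem in three steps, following the structure of the conclusion: (i) establish $\mathbb{S}_i \subseteq \mathbb{Q}_i$, (ii) show every $A \in \mathbb{Q}_i$ is a convex combination of matrices in $\mathbb{S}_i$, and (iii) deduce the capacity equality from Lemma \ref{lemma-6}. The inclusion $\mathbb{S}_i \subseteq \mathbb{Q}_i$ is almost immediate from the definitions (\ref{equation-24}) and (\ref{equation-25}): a conditional $p(x_{(i)}|x_i) \in \mathbb{P}_{(i)}^{dg}$ is a bona fide conditional distribution, and combining it with any marginal $p(x_i)$ yields a joint distribution in $\mathbb{P}$, so the induced channel $p(y|x_i) = \sum_{x_{(i)}} p(y|x_i,x_{(i)})p(x_{(i)}|x_i)$ lies in $\mathbb{Q}_i$ as well.

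For the convex-combination step I would index the elements of $\mathbb{S}_i$ by functions $\sigma: \mathcal{X}_i \to \mathcal{X}_{(i)}$, so that the matrix $A_\sigma \in \mathbb{S}_i$ has $j$-th column equal to the vector $(p(y_1|x_{ij},\sigma(x_{ij})), \ldots, p(y_m|x_{ij},\sigma(x_{ij})))^T$ appearing in equation (\ref{equation-8}) with $x_{(i)} = \sigma(x_{ij})$. Given any $A \in \mathbb{Q}_i$ with inducing conditional $p(x_{(i)}|x_i)$, define the mixing weights
\begin{align}
\lambda_\sigma = \prod_{x_i \in \mathcal{X}_i} p(\sigma(x_i)|x_i).
\end{align}
Each $\lambda_\sigma \ge 0$, and $\sum_\sigma \lambda_\sigma = \prod_{x_i}\sum_{x_{(i)}}p(x_{(i)}|x_i)=1$. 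Marginalizing this product distribution over the coordinates other than $x_{ij}$ gives back $p(\sigma(x_{ij})|x_{ij})$, so the $j$-th column of $\sum_\sigma \lambda_\sigma A_\sigma$ matches the right-hand side of (\ref{equation-8}) and hence equals the $j$-th column of $A$. Thus $A = \sum_\sigma \lambda_\sigma A_\sigma$ is the desired convex decomposition.

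Finally, I would apply Lemma \ref{lemma-6}. Fix $i \in [n]$ and $p(x_i) \in \mathbb{P}_i$. For any $A = \sum_\sigma \lambda_\sigma A_\sigma \in \mathbb{Q}_i$, the convexity of $I(X_i;Y)$ in $p(y|x_i)$ for fixed $p(x_i)$ gives
\begin{align}
I(X_i;Y)\big|_{A,\,p(x_i)} \le \sum_\sigma \lambda_\sigma\, I(X_i;Y)\big|_{A_\sigma,\,p(x_i)} \le \max_\sigma I(X_i;Y)\big|_{A_\sigma,\,p(x_i)},
\end{align}
so $\max_{p(y|x_i)\in\mathbb{Q}_i}\max_{p(x_i)}I(X_i;Y) \le \max_{p(y|x_i)\in\mathbb{S}_i}\max_{p(x_i)}I(X_i;Y)$; the reverse inequality is the already established inclusion $\mathbb{S}_i \subseteq \mathbb{Q}_i$. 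Taking a maximum over $i$ yields the stated chain of equalities, after noting that the first equality in the theorem is a routine rewriting of the definition of $C_1$ via the decomposition $p(x) = p(x_i)p(x_{(i)}|x_i)$.

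The main obstacle is the passage from a \emph{column-wise} convex decomposition (which equation (\ref{equation-8}) provides directly, with weights $p(x_{(i)}|x_{ij})$ that depend on the column index $j$) to a \emph{single} convex decomposition of the entire matrix by matrices in $\mathbb{S}_i$. The product-weight construction above resolves exactly this mismatch by decoupling the choice of extreme point for each column; every other step is either definitional or a direct appeal to Lemma \ref{lemma-6}.
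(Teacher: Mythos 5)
Your proposal is correct and follows essentially the same route as the paper: the column-wise convex decomposition coming from equation (\ref{equation-8}) combined with the convexity of $I(X_i;Y)$ in $p(y|x_i)$ from Lemma \ref{lemma-6}. The one place you go beyond the paper is the product-weight construction $\lambda_\sigma = \prod_{x_i} p(\sigma(x_i)|x_i)$, which upgrades the per-column decomposition (with $j$-dependent weights) to a single matrix-level convex combination over $\mathbb{S}_i$ --- a step the paper asserts in the theorem statement but never spells out, and which your construction verifies correctly since $\sum_{\sigma:\,\sigma(x_{ij})=x_{(i)}}\lambda_\sigma = p(x_{(i)}|x_{ij})$.
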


Theorem \ref{theorem-1} gives us a way to utilize the results in the information theory to construct privacy channels. We will show its powerful function in the next section.

\section{Some Basic Privacy Channels} \label{sec:some-channels}

In this section we show how to construct privacy channels by using Theorem \ref{theorem-1}. The key point is to transplant typical  channels in the information theory into the information privacy model to construct privacy channels. Some privacy channels are of special interest as discussed in the beginning of Section \ref{sec-individual-channel-capacity}, such as the random response channel, the Gaussian channel and the  (discrete) exponential channel which are respective counterparts of the random response mechanism, the Gaussian mechanism and the exponential mechanism of the differential privacy model. 

\subsection{Binary Symmetric/Randomized Response Privacy Channel} \label{subsec-binary-channel}

The binary symmetric privacy channel is a generalization of either the binary symmetric channel in \cite[\S  8.1.4]{DBLP:books/daglib/0016881} or the randomized response mechanism in the differential privacy model \cite{DBLP:journals/fttcs/DworkR14}. Let the query function $f$ be defined as $f: \mathcal X \rightarrow \mathcal Y=\{0,1\}$, such as the one in Example \ref{example-1}. Then the \emph{binary symmetric privacy channel} or the \emph{randomized response privacy channel} of $f$ is defined as
\begin{align} \label{equation-18}
p(y|x) = (1-p)^{1-|y-f(x)|}p^{|y-f(x)|}, \hspace{1cm} x\in \mathcal X,  y\in \mathcal Y,
\end{align}
where the output $f(x)$ is complemented with probability $p \in (0,1)$; that is, an error occurs with probability $p$ and in this condition, a $0$ of the value of $f(x)$ is received as a $1$ and vice versa. This privacy channel is the simplest privacy channel; yet it captures most of the complexity of the general problem.

We now evaluate the individual channel capacity of the binary symmetric privacy channel. By Theorem \ref{theorem-1}, we have 
\begin{align}
C_1 =& \max_{i\in [n]} \max_{p(y|x_i) \in \mathbb S_i} \max_{p(x_i) \in \mathbb P_i}  I(X_i;Y) \\
=& \max_{i\in [n]} \max_{p(y|x_i) \in \mathbb S_i} \max_{p(x_i) \in \mathbb P_i} \left( H(Y) - \sum_{x_i} p(x_i)H(Y|X=x_i) \right) \\
=&\max_{i\in [n]} \max_{p(y|x_i) \in \mathbb S_i} \max_{p(x_i) \in \mathbb P_i} \left( H(Y) - \sum_{x_i} p(x_i)H(p)\right)  \label{equation-10}\\
=&\max_{i\in [n]} \max_{p(y|x_i) \in \mathbb S_i} \max_{p(x_i) \in \mathbb P_i} \left( H(Y) - H(p)\right) \\
\le & \log 2- H(p)
\end{align}
where (\ref{equation-10}) follows from that the conditional probability distribution of $Y|X=x_i$ is one line parallel to the $y$-coordinate axis in Fig. \ref{figure-1} which is just the one in (\ref{equation-18}), where the last inequality follows since $Y$ is a binary random variable, and where 
\begin{align}
H(p) = -p\log p -(1-p)\log (1-p).
\end{align} 
\emph{The key point of the above evaluating is that each channel in the set $\mathbb S_i$ is a binary symmetric channel in the information theory.} 

Setting $\log 2-H(p) \le \epsilon$ with $\epsilon \in (0,1)$, then $H(p) \ge \log 2-\epsilon$. Therefore, when setting
\begin{align}
p\in (p^*,1-p^*),
\end{align} 
the binary symmetric privacy channel satisfies $\epsilon$-information privacy with respect to $\mathbb P$, where $p^*$ is the solution of the equation
\begin{align}
H(p) = \log 2-\epsilon.
\end{align}

\subsection{Data-Independent Privacy Channels}

We now consider the data-independent privacy channels \cite{DBLP:conf/sigmod/HayMMCZ16,DBLP:journals/isci/Soria-ComasD13}, of which the binary symmetric privacy channel is one special case.  The data-independent privacy channels are very popular and important in the differential privacy model. For example, the Laplace mechanism and the Gaussian mechanism both are (continuous-case) data-independent privacy channels. Informally speaking, for a data-independent channel, each probability distribution is constant across $\mathcal Y$. The following gives the definition. 

\begin{definition}[Data-Independent Privacy Channel]
Let $p(y|x)$ be one privacy channel of the function $f:\mathcal X \rightarrow \mathcal Y$. For any two datasets $x,x'\in \mathcal X$, if the two probability distributions $p(y|x), p(y|x')$ over $\mathcal Y$ are a permutation of each other, then the privacy channel is said to be data-independent.
\end{definition}

Now we estimate the individual channel capacities of the data-independent privacy channels. Let the privacy channel $p(y|x)$ be data-independent. Then there is
\begin{align}
C_1=& \max_{i\in [n]} \max_{p(y|x_i) \in \mathbb S_i} \max_{p(x_i) \in \mathbb P_i}  I(X_i;Y) \\
=& \max_{i\in [n]} \max_{p(y|x_i) \in \mathbb S_i} \max_{p(x_i) \in \mathbb P_i}\left( H(Y) - \sum_{x_i} p(x_i)H(Y|X=x_i) \right) \\
=&\max_{i\in [n]} \max_{p(y|x_i) \in \mathbb S_i} \max_{p(x_i) \in \mathbb P_i}\left( H(Y) - \sum_{x_i} p(x_i)H(Z)\right)  \\
=&\max_{i\in [n]} \max_{p(y|x_i) \in \mathbb S_i} \max_{p(x_i) \in \mathbb P_i} \left( H(Y) - H(Z)\right) \\
=&\max_{i\in [n]} \max_{p(y|x_i) \in \mathbb S_i} \max_{p(x_i) \in \mathbb P_i} H(Y) - H(Z) \\
\le & \log |\mathcal Y|- H(Z),  \label{equation-14}
\end{align}
where the random variable $Z \sim p(y|x)$ for a fixed $x\in \mathcal X$. The same as in Section \ref{subsec-binary-channel}, the key point of the above proof is that each channel in $\mathbb S_i$ is a data-independent channel in the information theory. 
Furthermore, in general, the upper bound in (\ref{equation-14}) is hard to be reached. The following defined privacy channel, however, is an exception.

\begin{definition}[Weakly Symmetric(-Reachable) Privacy Channel] \label{definition-4}
Let $p(y|x)$ be a data-independent privacy channel of $f:\mathcal X\rightarrow \mathcal Y$ and let $\mathbb S_i$ be defined as in Section \ref{subsec-general-result}. Then the privacy channel is said to be weakly symmetric(-reachable) if there exist one $i'\in [n]$ and one channel $p^*(y|x_{i'})\in \mathbb S_{i'}$ such that $\sum_{x_{i'}\in \mathcal X_{i'}} p^*(y|x_{i'})$ is constant for all $y\in \mathcal Y$. 
\end{definition}

The weakly symmetric privacy channel can reach the upper bound in (\ref{equation-14}), whose proof is the same as \cite[Theorem 8.2.1]{DBLP:books/daglib/0016881}. We then have the following result.

\begin{corollary}
Let $C_1$ be the individual channel capacity of the data-independent privacy channel $p(y|x)$ of $f:\mathcal X\rightarrow \mathcal Y$ with respect to $\mathbb P$. Then there is
\begin{align}
C_1 \le \log |\mathcal Y| - H(Z),
\end{align}
where $Z \sim p(y|x)$ for a fixed $x\in \mathcal X$. Furthermore, if the privacy channel $p(y|x)$ is weakly symmetric, then there is 
\begin{align}
C_1 = \log |\mathcal Y| - H(Z),
\end{align}
which is achieved by a uniform distribution on the set $\mathcal X_{i'}$, where ${i'}$ is the one in Definition \ref{definition-4}.
\end{corollary}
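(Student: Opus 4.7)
The plan is to split the claim into two parts and handle each separately. For the upper bound, essentially nothing new is needed: the chain of equalities and the inequality culminating in display (\ref{equation-14}) already shows that for any data-independent privacy channel one has $C_1\le \log|\mathcal Y|-H(Z)$, because $H(Y|X_i=x_i) = H(Z)$ whenever $p(y|x_i)$ is a row-permutation of a fixed output distribution (which is exactly what the data-independent property forces on every element of $\mathbb S_i$). So the first assertion requires only that I cite those lines.

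For the equality under weak symmetry, my plan is to exhibit a choice of $i$, of channel in $\mathbb S_i$, and of input distribution that saturates the bound, after which maximality does the rest. Concretely, I would fix the index $i'$ and the channel $p^*(y|x_{i'})\in\mathbb S_{i'}$ supplied by Definition \ref{definition-4}, observe that since row sums equal one we must have $\sum_{x_{i'}} p^*(y|x_{i'}) = |\mathcal X_{i'}|/|\mathcal Y|$ for every $y$, and then take $p(x_{i'})$ to be uniform on $\mathcal X_{i'}$. A one-line computation then gives $p(y)=1/|\mathcal Y|$, hence $H(Y)=\log|\mathcal Y|$.

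The second ingredient I need is that $H(Y\mid X_{i'})=H(Z)$ under this choice. Each column of $p^*(y|x_{i'})$, by the definition of $\mathbb S_{i'}$, coincides with $p(y\mid x_i,x_{(i)})$ for some fixed completion $x_{(i)}$, so data-independence transfers: every conditional $p^*(\cdot|x_{i'})$ is a permutation of the fixed distribution $p(y|x)$ defining $Z$, and therefore $H(Y|X_{i'}=x_{i'})=H(Z)$ for each $x_{i'}$. Averaging yields $H(Y|X_{i'})=H(Z)$, so $I(X_{i'};Y)=\log|\mathcal Y|-H(Z)$; combined with the upper bound already proved this gives equality, and the witness is the uniform distribution on $\mathcal X_{i'}$ as claimed. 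This mirrors the argument for weakly symmetric channels in Cover--Thomas's Theorem 8.2.1.

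The only subtle step is the second one: making sure that the data-independence hypothesis, which is a property of the original channel $p(y|x)$ on $\mathcal X$, really does imply that every channel in $\mathbb S_{i'}$ inherits constant conditional output-entropy. This follows immediately from the characterization of $\mathbb S_{i'}$ in (\ref{equation-25}) as channels whose columns are columns of the original transition tensor, but it is worth spelling out; once it is granted, the rest of the argument is mechanical.
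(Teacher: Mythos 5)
Your proposal is correct and follows essentially the same route as the paper: the upper bound is read off from the chain ending in (\ref{equation-14}), and the equality for weakly symmetric channels is the argument of Theorem 8.2.1 in Cover--Thomas (uniform input on $\mathcal X_{i'}$ gives $p(y)=1/|\mathcal Y|$ via the constant column-sum $|\mathcal X_{i'}|/|\mathcal Y|$, while data-independence forces $H(Y|X_{i'}=x_{i'})=H(Z)$ for every column of every channel in $\mathbb S_{i'}$). The paper merely cites that theorem where you spell the computation out, so there is no substantive difference.
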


\subsection{(Discrete) Exponential Privacy Channel}

The exponential mechanism \cite{DBLP:conf/focs/McSherryT07,DBLP:journals/fttcs/DworkR14} is fundamental in the differential privacy model. 
Now we construct its counterpart in the information privacy model. 
Before presenting the channel, we have to do some preparation. 

\begin{definition}[Distortion Function \cite{DBLP:books/daglib/0016881,wu-he-xia2017-b}]
A distortion function is a mapping
\begin{align}
d:\mathcal Y \times \mathcal Y \rightarrow \mathbb R^+,
\end{align}
where $\mathbb R^+$ is the set of nonnegative real numbers.
The distortion $d(y,y')$ is a measure of the cost of representing the symbol $y$ by the symbol $y'$.
\end{definition}

For the function $f:\mathcal X \rightarrow \mathcal Y$, set $\mathcal Y = \{y_1, y_2, \ldots, y_k\}$. Then $|\mathcal Y|=k$. To each $x\in \mathcal X$, assume
\begin{align}
d(y_{i_1},f(x)) \le d(y_{i_2},f(x)) \le \cdots \le d(y_{i_k},f(x)),
\end{align}
where $(i_1, i_2,\ldots, i_k)$ is one permutation of $[k]$. Then define a function $\phi_x$ on $\mathcal Y$ as 
\begin{align} \label{equation-12}
\phi_x(y_{i_j}) = j-1, \hspace{1cm} j\in [k].
\end{align}

\begin{definition}[(Discrete) Exponential Privacy Channel]
To each $x\in \mathcal X$, we define 
\begin{align} \label{equation-13}
p(y|x) = \frac{1}{\alpha} \exp\left(\frac{-\phi_x(y)}{N}\right), \hspace{1cm}y\in \mathcal Y,
\end{align}
where $\phi_x$ is defined as in (\ref{equation-12}), $N$ is a positive parameter, and
\begin{align}
\alpha= \sum_y \exp\left(\frac{-\phi_x(y)}{N}\right)=\sum_{i=0}^{k-1}\exp\left(\frac{-i}{N}\right)
\end{align}
is the normalizer. Then the above defined privacy channel $p(y|x)$ is said to be the (discrete) exponential privacy channel with the parameter $N$.
\end{definition}

Note that the exponential privacy channel defined above is a little different from the exponential mechanism in \cite{DBLP:conf/focs/McSherryT07,DBLP:journals/fttcs/DworkR14}. This is intentioned since we want to let the privacy channel be data-independent, which is easy to be verified.  Let $Z \sim p(y|x)$ where the probability distribution $p(y|x)$ is defined in (\ref{equation-13}). We now evaluate the entropy $H(Z)$ of $Z$. Recall that $k=|\mathcal Y|$. 
We have
\begin{align}
H(Z) =-\sum_y p(y|x)\log p(y|x) =& \frac{1}{\alpha}\sum_{y} \exp\left(\frac{-\phi_x(y)}{N}\right) \left (\frac{\phi_x(y)}{N}+\log \alpha \right)\\
=&\frac{1}{\alpha}\sum_{i=0}^{k-1} \exp\left(\frac{-i}{N}\right) \left(\frac{i}{N}+\log \alpha\right) \\
=& \log \alpha + \frac{1}{\alpha}\sum_{i=0}^{k-1} \exp\left(\frac{-i}{N}\right) \frac{i}{N} 
\end{align}
Note that the last expression is related to the two series $\alpha, \sum_{i=0}^{k-1} \exp\left(\frac{-i}{N}\right) \frac{i}{N} $.  
The series $\alpha$ is geometric and then is evaluated as
\begin{align}
\alpha= \frac{1-\exp(-k/N)}{1-\exp(-1/N)}. 
\end{align}
The second series $\sum_{i=0}^{k-1} \exp\left(\frac{-i}{N}\right) \frac{i}{N} $ is a bit complicated. Set $\lambda=1/N$. There is
\begin{align}
  \sum_{i=0}^{k-1}  \exp\left(\frac{-i}{N}\right) \frac{i}{N} 
=& \lambda \left( \sum_{i=0}^{k-1}  \exp\left(-i\lambda\right) (i+1) - \sum_{i=0}^{k-1}  \exp\left(-i\lambda\right)  \right) \\
=& \lambda \left( \exp(\lambda)\left(\sum_{i=0}^{k-1}  \exp\left(-i\lambda\right) i  + k\exp(-k\lambda)\right) -\sum_{i=0}^{k-1}  \exp\left(-i\lambda\right) \right) \\
=& \lambda \left( \exp(\lambda)\sum_{i=0}^{k-1}  \exp\left(-i\lambda\right) i  + k\exp((1-k)\lambda) - \frac{1-\exp(-k\lambda)}{1-\exp(-\lambda)}\right)
\end{align}
Then the second series have the value
\begin{align}
 \sum_{i=0}^{k-1}  \exp\left(-i\lambda\right) i =& \frac{k\exp((1-k)\lambda) - \frac{1-\exp(-k\lambda)}{1-\exp(-\lambda)}}{1-\exp(\lambda)} \\
 =& k\frac{\exp((1-k)\lambda) }{1-\exp(\lambda)} -\frac{ 1-\exp(-k\lambda)}{(1-\exp(-\lambda))(1-\exp(\lambda))} 
\end{align}
Therefore,
\begin{align}
H(Z) &= \log \alpha + \frac{1}{\alpha}\sum_{i=0}^{k-1} \exp\left(\frac{-i}{N}\right) \frac{i}{N} \\
=& \log \frac{1-\exp(-k\lambda)}{1-\exp(-\lambda)}+ \frac{1-\exp(-\lambda)}{1-\exp(-k\lambda)} \lambda\left(  k\frac{\exp((1-k)\lambda) }{1-\exp(\lambda)} -\frac{ 1-\exp(-k\lambda)}{(1-\exp(-\lambda))(1-\exp(\lambda))} \right) \\
=&\log \frac{1-\exp(-k\lambda)}{1-\exp(-\lambda)}+ \frac{\lambda}{\exp(\lambda)-1}  -  \frac{k \lambda}{\exp(k\lambda)-1}
\end{align}
where $k \ge 1$ and $\lambda=\frac{1}{N} \le 1$.

Let $\log k - H(Z) \le \epsilon$, which ensures $C_1 \le \epsilon$ by (\ref{equation-14}). Then
\begin{align}
H(Z) =\log \frac{1-\exp(-k\lambda)}{1-\exp(-\lambda)}+ \frac{\lambda}{\exp(\lambda)-1}  -  \frac{k \lambda}{\exp(k\lambda)-1}\ge \log k -\epsilon.
\end{align}
Using numerical method we can find suitable $\lambda=1/N$ or $N$ to satisfy the above inequality, which then ensures $\epsilon$-information privacy with respect to $\mathbb P$ of the exponential privacy channel.


\section{Continuous Case} \label{section-continuous-case}

In this section we consider the case where $\mathcal X_i, i\in [n]$  are continuous sets. In this setting, $X_i$ and $X$ are continuous variables. We also let $Y$ and $\mathcal Y$ be continuous. Then all the related probability distributions of the above random variables become the corresponding density functions. The entropy and the mutual information become integrals as defined in \cite[\S 9]{DBLP:books/daglib/0016881}; that is,
\begin{align}
H(X) = - \int_x p(x)\log p(x)dx
\end{align}
and 
\begin{align} \label{equation-19}
I(X_i;Y) = \int_{x_i,y} p(x_i,y) \log\frac{p(x_i,y)}{p(x_i)p(y)}dx_idy= \int_{x_i,y} p(x_i)p(y|x_i) \log\frac{p(y|x_i)}{\int_{x_i'}p(y|x_i')p(x_i')dx_i'}dx_idy,
\end{align}  
where $p(y|x_i) = \int_{x_{(i)}} p(y|x_i,x_{(i)}) p(x_{(i)}|x_i) dx_{(i)}$. 

The main task of this section is to generalize Theorem \ref{theorem-1} to the continuous case. We now define the sets $\mathbb Q_i$ and $\mathbb S_i$. Recall that the channel $(\mathcal X_i, p(y|x_i), \mathcal Y)$ has of the formula
\begin{align} 
p(y|x_i) = \int_{x_{(i)}} p(y|x_i,x_{(i)}) p(x_{(i)}|x_i) dx_{(i)}, \hspace{1cm} (x_i, y) \in (\mathcal X_i, \mathcal Y). 
\end{align}
Let $\mathbb Q_i$ be the set of all possible above channels; that is, let
\begin{align} \label{equation-30}
\mathbb Q_i = \left\{ ( \mathcal X_i, p(y|x_i), \mathcal Y):  p(x_{(i)}|x_i) \in \mathbb P \right\},
\end{align}
where $ p(x_{(i)}|x_i) \in \mathbb P$ means there exists $X \in \mathbb P$ such that $X \sim p(x_i) p(x_{(i)}|x_i)$. Let $\mathbb S_i$ be the set
\begin{align} \label{equation-29}
\mathbb S_i 
=& \left\{( \mathcal X_i, p(y|x_i), \mathcal Y): \mbox{for each } x_i, \exists x_{(i)} \mbox{ such that } \forall y, p(y|x_i)=p(y|x_i,x_{(i)}) \right\}.
\end{align}
Then the continuous case generalization of Theorem \ref{theorem-1} is shown as follow.  

\begin{theorem} \label{corollary-1}
Assume the individual channel capacity $C_1$ of the privacy channel $(\mathcal X, p(y|x), \mathcal Y)$ with respect to $\mathbb P$, defined as
\begin{align}
C_1=\max_{i\in [n],X\in \mathbb P} I(X_i;Y)=\max_{i\in [n]} \max_{p(y|x_i) \in \mathbb Q_i} \max_{p(x_i) \in \mathbb P_i}  I(X_i;Y), 
\end{align}
exists, and let $\mathbb Q_i$ and $\mathbb S_i$ be defined in (\ref{equation-30}) and (\ref{equation-29}), respectively. Assume the function $p(y|x)$ is continuous on $\mathcal Y \times \mathcal X$. 
Then there are $\mathbb S_i \subseteq \mathbb Q_i$ and  
\begin{align} \label{equation-40}
C_1=\max_{i\in [n]} \max_{p(y|x_i) \in \mathbb S_i} \max_{p(x_i) \in \mathbb P_i}  I(X_i;Y).
\end{align}
\end{theorem}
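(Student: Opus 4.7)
The plan is to verify both directions of (\ref{equation-40}) in close analogy with the discrete argument leading to Theorem \ref{theorem-1}, with the continuity hypothesis on $p(y|x)$ doing the work that finiteness of $\mathcal X$ did before. The non-trivial direction is
\begin{align}
\sup_{p(y|x_i) \in \mathbb Q_i} \sup_{p(x_i) \in \mathbb P_i} I(X_i;Y) \;\le\; \sup_{p(y|x_i) \in \mathbb S_i} \sup_{p(x_i) \in \mathbb P_i} I(X_i;Y),
\end{align}
for which the strategy is to realize an arbitrary channel in $\mathbb Q_i$ as an \emph{integral} convex combination of channels in $\mathbb S_i$ and then invoke the convexity of mutual information as a functional of the channel for a fixed input distribution.

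Concretely, fix $p(x_i) \in \mathbb P_i$ and a conditional density $p(x_{(i)}|x_i)$ giving the channel $p(y|x_i)=\int p(y|x_i,x_{(i)})\,p(x_{(i)}|x_i)\,dx_{(i)} \in \mathbb Q_i$. Let $\Phi$ denote the space of measurable selection functions $\phi:\mathcal X_i \to \mathcal X_{(i)}$, equipped with the product probability measure $\mu$ whose marginal at each coordinate $x_i$ is the density $p(\cdot|x_i)$. For each $\phi \in \Phi$, the channel $q_\phi(y|x_i) := p(y|x_i,\phi(x_i))$ lies in $\mathbb S_i$, and since $q_\phi(y|x_i)$ depends only on the $x_i$-th coordinate of $\phi$, Fubini's theorem yields
\begin{align}
\int_\Phi q_\phi(y|x_i)\, d\mu(\phi) = \int_{\mathcal X_{(i)}} p(y|x_i,x_{(i)})\, p(x_{(i)}|x_i)\, dx_{(i)} = p(y|x_i).
\end{align}
Crucially, the mixing measure $\mu$ does not depend on $x_i$, so this is a genuine mixture of channels with common weights. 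Applying the continuous-case analog of Lemma \ref{lemma-6}, namely convexity of $I(X_i;Y)$ in the channel for fixed $p(x_i)$, which follows from Jensen's inequality applied to the KL-divergence form of (\ref{equation-19}), gives
\begin{align}
I(X_i;Y) \;\le\; \int_\Phi I_\phi(X_i;Y)\, d\mu(\phi) \;\le\; \sup_{\phi \in \Phi} I_\phi(X_i;Y),
\end{align}
where $I_\phi$ denotes mutual information computed with channel $q_\phi$. Taking suprema over $p(x_{(i)}|x_i)$ and $p(x_i)$ delivers the desired inequality.

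The reverse direction $\sup_{\mathbb S_i} \le \sup_{\mathbb Q_i}$ requires an approximation argument specific to the continuous case, since an element $q_\phi \in \mathbb S_i$ corresponds to $p(x_{(i)}|x_i)$ being a Dirac measure at $\phi(x_i)$, which is not itself a density lying in $\mathbb P$. I would handle this by mollification: pick conditional densities $p_n(x_{(i)}|x_i)$ concentrating weakly at $\phi(x_i)$, use the hypothesis that $p(y|x)$ is continuous on $\mathcal Y \times \mathcal X$ to deduce $p_n(y|x_i) \to q_\phi(y|x_i)$ pointwise (indeed locally uniformly), and pass to the limit in the mutual information. The principal obstacle is the continuous-setting technicalities in the mixture step: verifying measurability of $\phi \mapsto I_\phi(X_i;Y)$ on the infinite-product space $\Phi$, justifying Fubini and the integral Jensen inequality with possibly unbounded integrands, and producing enough regularity (dominated convergence or uniform integrability) to pass to the limit in the mollification. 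Under the continuity hypothesis and mild integrability assumptions these steps should go through with standard techniques, but they are where essentially all the technical care must be concentrated.
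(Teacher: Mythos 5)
Your proposal is correct in outline and reaches the same two sub-goals as the paper, but by a genuinely different route for the main inequality. The paper proves $\sup_{\mathbb Q_i}\le\sup_{\mathbb S_i}$ by approximating the integrand $p(y|x_i,\cdot)$ with simple functions, so that $p(y|x_i)$ becomes a limit of \emph{finite} convex combinations, applying the finite-combination convexity of Lemma \ref{lemma-7} at each stage (inequalities (\ref{equation-32})--(\ref{equation-34})), and then interchanging limits with the integrals and the logarithm. You instead represent an arbitrary channel of $\mathbb Q_i$ exactly as an \emph{integral} mixture $\int_\Phi q_\phi\,d\mu(\phi)$ over the space of selection functions $\phi:\mathcal X_i\to\mathcal X_{(i)}$, with a product mixing measure that does not depend on $x_i$, and then apply the integral form of Jensen once. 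Your construction is in fact the cleaner one on a key point: in the paper's step (\ref{equation-32}) the mixture weights $\mu(A_j^{x_i})$ still depend on $x_i$, whereas Lemma \ref{lemma-7} concerns mixing whole channels with common weights; your lift to the function space $\Phi$ is precisely what removes that dependence (and is also the honest justification of the discrete claim in Theorem \ref{theorem-1} that every channel in $\mathbb Q_i$ is \emph{one} convex combination of channels in $\mathbb S_i$). What the paper's route buys in exchange is that it never has to integrate over an uncountable product space. For the other direction both arguments are essentially the same shrinking-support approximation of a point mass, using continuity of $p(y|x)$; the paper uses uniform densities on balls of radius $1/k$ plus the mean value theorem, you use general mollifiers.

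Two caveats. First, the measurability of $\phi\mapsto I_\phi(X_i;Y)$ with respect to the product $\sigma$-algebra on $\Phi$ is not a routine technicality: a functional that depends on uncountably many coordinates of $\phi$ is generically \emph{not} measurable for the product $\sigma$-algebra, so your integral Jensen step needs either an outer-integral formulation, a reduction to countably many $x_i$ (e.g.\ via separability of $\mathcal X_i$ and continuity of $p(y|x)$), or a retreat to the paper's finite-approximation scheme; as written this is a genuine gap, not just "standard technique." Second, your mollification argument yields $\sup_{\mathbb S_i}\le\sup_{\mathbb Q_i}$ but not the literal set inclusion $\mathbb S_i\subseteq\mathbb Q_i$ asserted in the theorem, since a Dirac kernel is not a density in $\mathbb P$; you are right to notice this, and indeed the paper's own claim that $\mu=\lim_k\mu_k\in\mathbb P$ is the weakest step of its proof, so you should state explicitly that you prove the capacity identity (\ref{equation-40}) rather than the inclusion.
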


The proof of the above theorem needs the continuous case generalization of Lemma \ref{lemma-6}.

\begin{lemma}\label{lemma-7}
Let $X,Y$ both be continuous random variables. 
Let $(X,Y) \sim p(x,y) = p(x)p(y|x)$. Then the mutual information $I(X;Y)$ is a convex function of $p(y|x)$ for fixed $p(x)$. 
\end{lemma}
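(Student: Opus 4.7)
My plan is to mirror the discrete-case argument (Lemma \ref{lemma-6}) with sums replaced by integrals. The key structural observation is that, once $p(x)$ is fixed, the map $p(y|x) \mapsto p(y) = \int p(x) p(y|x) dx$ is linear in $p(y|x)$, so a convex combination of conditional densities induces the corresponding convex combination of the $Y$-marginal. This linearity lets one reduce the convexity of $I(X;Y)$ in $p(y|x)$ to a pointwise inequality coming from the convexity of $t \mapsto t\log t$.

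Concretely, I would fix $p(x)$, take two conditional densities $p_1(y|x), p_2(y|x)$ and $\lambda \in [0,1]$, and set $q_\lambda(y|x) = \lambda p_1(y|x) + (1-\lambda) p_2(y|x)$, which induces the marginal $q_\lambda(y) = \lambda p_1(y) + (1-\lambda) p_2(y)$ with $p_i(y) = \int p(x) p_i(y|x) dx$. The two-term log-sum inequality (equivalently, Jensen's inequality applied to $t \mapsto t\log t$) gives, at every $(x,y)$,
\begin{align}
q_\lambda(y|x) \log \frac{q_\lambda(y|x)}{q_\lambda(y)} \leq \lambda p_1(y|x) \log \frac{p_1(y|x)}{p_1(y)} + (1-\lambda) p_2(y|x) \log \frac{p_2(y|x)}{p_2(y)}.
\end{align}
Multiplying both sides by $p(x)$ and integrating over $\mathcal X \times \mathcal Y$ via Fubini then produces the desired
\begin{align}
I_{q_\lambda}(X;Y) \leq \lambda I_{p_1}(X;Y) + (1-\lambda) I_{p_2}(X;Y),
\end{align}
using the expression (\ref{equation-19}) for mutual information.

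The main obstacle is measure-theoretic rather than algebraic: in the continuous setting, the densities can vanish on sets of positive measure and the integrals defining $I(X;Y)$ can diverge, so the ratios appearing in the pointwise inequality must be interpreted carefully. My plan is to handle this by assuming $I_{p_1}(X;Y), I_{p_2}(X;Y) < \infty$ (the claim is trivial otherwise) and adopting the standard conventions $0 \log 0 = 0$ and $a \log(a/0) = +\infty$ for $a > 0$, under which the log-sum inequality remains valid pointwise. Exchanging the inequality with the integral is then justified by splitting each integrand into positive and negative parts and applying Tonelli's theorem to the nonnegative pieces, which is where the continuity hypothesis on $p(y|x)$ assumed in Theorem \ref{corollary-1} makes the argument smooth.
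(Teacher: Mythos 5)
Your proposal is correct and follows essentially the same route as the paper, which simply omits the proof on the grounds that it is identical to the discrete-case argument of Cover--Thomas Theorem 2.7.4 (the log-sum/convexity-of-$t\log t$ argument with sums replaced by integrals). Your added care about the measure-theoretic conventions and the use of Tonelli's theorem is a welcome refinement of details the paper leaves implicit.
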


\begin{proof}
The proof is omitted since it is the same as the one of  \cite[Theorem 2.7.4]{DBLP:books/daglib/0016881}. 
\qed
\end{proof}

Now we prove Theorem \ref{corollary-1}.

\begin{proof}[of Theorem \ref{corollary-1}]
We first do a simplification. We have
\begin{align}  \label{equation-35}
p(y|x_i) = \int_{x_{(i)}} p(y|x_i,x_{(i)}) p(x_{(i)}|x_i) dx_{(i)}=\int_{x_{(i)}} p(y|x_i,x_{(i)})  \mu^{x_i}(dx_{(i)}),
\end{align}
where $\mu^{x_i}$ is the probability measure on $\mathcal X_{(i)}$ such that
\begin{align}
\mu^{x_i}(A) = \int_{x_{(i)}\in A} p(x_{(i)}|x_i) dx_{(i)}
\end{align}
for all measurable set $A\subseteq \mathcal X_{(i)}$. Clearly, there is $\mu^{x_i}(\mathcal X_{(i)})=1$. 
Note that the probability measures $\mu^{x_i}$ in (\ref{equation-35}) are different for different $x_i\in \mathcal X_i$. 

The first claim $\mathbb S_i \subseteq \mathbb Q_i$ is proved as follows. 
Let $p'(y|x_i) \in \mathbb S_i$. Then, to each $x_i$, there exists $x_{(i)}^{x_i}\in \mathcal X_{(i)}$ such that 
\begin{align}
p'(y|x_i) = p(y|x_i, x_{(i)}^{x_i})
\end{align}  
for all $y\in \mathcal Y$ by (\ref{equation-29}). In order to prove the first claim, we need to prove that there exists a probability transition matrix $\mu=\{\mu^{x_i}: x_i\in \mathcal X_i\}\in \mathbb P$ such that 
\begin{align}
( \mathcal X_i, p(y|x_i), \mathcal Y)_{\mu} = ( \mathcal X_i, p'(y|x_i), \mathcal Y),
\end{align}
where the left side of the equation is the channel when setting the probability transition matrix $p(x_{(i)}|x_i)$ to be $\mu$.
This probability transition matrix $\mu$ is constructed as the limit of a sequence of probability transition matrix $\{\mu_k\}_{k\ge 1}$ as follows.
Fix $x_i$ and $k$. By (\ref{equation-35}), set $p(x_{(i)}|x_i)$ to be the simple function \cite[Definition 2.3.1]{athreya2006measure}
\begin{align}
p(x_{(i)}|x_i) = p^{x_i}_k(x_{(i)}) =
\begin{cases}
c_k^{x_i} & x_{(i)} \in A_k^{x_i} \\
0    & x_{(i)} \in A_1^{x_i}=\mathcal X_{(i)}-A_k^{x_i}
\end{cases}
\end{align}
with $\int_{x_{(i)}\in \mathcal X_{(i)}}p(x_{(i)}|x_i)dx_{(i)} = c_k^{x_i} \mu_k^{x_i}(A_k^{x_i}) =1$, where 
\begin{align}
A_k^{x_i}= B_{1/k}(x_{(i)}^{x_i})= \left\{x_{(i)}\in \mathcal X_{(i)}: \|x_{(i)}-x_{(i)}^{x_i}\|_2\le 1/k\right\}
\end{align} 
is the ball with the center $x_{(i)}^{x_i}$ and the radius $1/k$. 
Inputting the above probability transition matrix $\mu_k=\{\mu_k^{x_i}: x_i\in \mathcal X_i\}$ into (\ref{equation-35}) and letting $k \rightarrow \infty$, we have 
\begin{align} 
p(y|x_i) 
=& \lim_{k \rightarrow \infty} \int_{x_{(i)}\in A_k^{x_i}} p(y|x_i,x_{(i)}) c_k^{x_i}\mu_k^{x_i}(dx_{(i)}) \\ \label{equation-36}
=& \lim_{k \rightarrow \infty} \frac{\int_{x_{(i)}\in A_k^{x_i}} p(y|x_i,x_{(i)})  \mu_k^{x_i}(dx_{(i)})}{\mu_k^{x_i}(A_k^{x_i})} \\ \label{equation-37}
=&  \lim_{k \rightarrow \infty}  p(y|x_i, \bar x_{(i)}^{x_i})   \hspace{1cm} \exists \bar x_{(i)}^{x_i} \in A_k^{x_i}\\ \label{equation-38}
=& p(y|x_i,  x_{(i)}^{x_i})\\
=&p'(y|x_i), 
\end{align}
where (\ref{equation-36}) is due to $\mu_k^{x_i}(A_k^{x_i})c_k^{x_i}=1$, (\ref{equation-37}) is due to the mean value theorem, and (\ref{equation-38}) is due to the continuity of the function $p(y|x)$ on $\mathcal Y \times \mathcal X$. Since $\mu=\lim_{k\rightarrow \infty}\mu_k \in \mathbb P$, we then have $p'(y|x) \in \mathbb Q_i$, which implies $\mathbb S_i \subseteq \mathbb Q_i$. The first claim is proved. 

The second claim, i.e. the equation (\ref{equation-40}), is proved as follows. Since the first claim holds, then we only need to prove
\begin{align} \label{equation-39}
C_1\le\max_{i\in [n]} \max_{p(y|x_i) \in \mathbb S_i} \max_{p(x_i) \in \mathbb P_i}  I(X_i;Y).
\end{align}
The crux of the following  proofs is the observation that an integral of a function is defined as a limitation of the integrals of a sequence of simple functions \cite[\S 2]{athreya2006measure}. Specifically, the integral in (\ref{equation-35}) is a limitation of the integrals of a sequence of simple functions, which will be defined as follows.
Then the inequality (\ref{equation-39}) should be a corollary of the convexity of mutual information in Lemma \ref{lemma-7}.
To each $x_i$, let 
\begin{align}
p_{k_{x_i}} ^{x_i}(x_{(i)})= \sum_{j=1}^{k_{x_i}} c_j^{x_i} I_{A_j^{x_i}} (x_{(i)}), \hspace{.5cm}x_{(i)}\in \mathcal X_{(i)}
\end{align}
be a simple function on $\mathcal X_{(i)}$ as defined in \cite[Definition 2.3.1]{athreya2006measure}, and assume $p_{k_{x_i}} ^{x_i}(x_{(i)}) \uparrow_{k_{x_i}} p(y|x_i,x_{(i)}) $ for all $x_{(i)}\in \mathcal X_{(i)}$. Then, to each $x_i$, there is 
\begin{align}  \label{equation-31}
p(y|x_i) =  \lim_{k_{x_i} \rightarrow \infty} \int_{x_{(i)}} p_{k_{x_i}} ^{x_i}(x_{(i)})\mu(dx_{(i)}) = \lim_{k_{x_i} \rightarrow \infty}  \sum_{j=1}^{k_{x_i}} c_j^{x_i} \mu(A_j^{x_i}) 
\end{align}
by \cite[Definition 2.3.2, Definition 2.3.3]{athreya2006measure}. Inputting (\ref{equation-31}) into (\ref{equation-19}), we have
\begin{align}
I(X_i;Y) 
=& \int_{x_i,y} p(x_i)\lim_{k_{x_i} \rightarrow \infty}  \sum_{j=1}^{k_{x_i}} c_j^{x_i} \mu(A_j^{x_i})  \log\frac{\lim_{k_{x_i} \rightarrow \infty}  \sum_{j=1}^{k_{x_i}} c_j^{x_i} \mu(A_j^{x_i}) }{\int_{x_i'}\lim_{k_{x_i'} \rightarrow \infty}  \sum_{j=1}^{k_{x_i'}} c_j^{x_i'} \mu(A_j^{x_i'}) p(x_i')dx_i'}dx_idy \\
=& \lim_{k_{x_i},k_{x_i'} \rightarrow \infty:\forall x_i,x_i'}
\int_{x_i,y} p(x_i) \sum_{j=1}^{k_{x_i}} c_j^{x_i} \mu(A_j^{x_i})  \log\frac{ \sum_{j=1}^{k_{x_i}} c_j^{x_i} \mu(A_j^{x_i}) }{ \int_{x_i'} \sum_{j=1}^{k_{x_i'}} c_j^{x_i'} \mu(A_j^{x_i'}) p(x_i')dx_i'}dx_idy \\ \label{equation-32}
\le& \lim_{k_{x_i},k_{x_i'} \rightarrow \infty:\forall x_i,x_i'}\sum_{j=1}^{k_{x_i}}  \mu(A_j^{x_i})
\int_{x_i,y} p(x_i) c_j^{x_i}  \log\frac{  c_j^{x_i}  }{ \int_{x_i'}  c_j^{x_i'}  p(x_i')dx_i'}dx_idy \\ \label{equation-33}
\le & \lim_{k_{x_i}\rightarrow \infty:\forall x_i} \max_{j\in [k_{x_i}]} \int_{x_i,y} p(x_i) c_j^{x_i} \log\frac{c_j^{x_i} }{\int_{x_i'}  p(x_i') c_j^{x_i'} dx_i'}dx_idy  \\ \label{equation-34}
\le& \int_{x_i,y} p(x_i) p(y|x_i,  x_{(i)}^{x_i}) \log\frac{p(y|x_i,  x_{(i)}^{x_i}) }{\int_{x_i'} p(y|x_i,  x_{(i)}^{x_i'}) p(x_i')dx_i'}dx_idy 
\end{align} 
where (\ref{equation-32}) is due to the convexity of mutual information in Lemma \ref{lemma-7}, (\ref{equation-33}) is due to 
$\sum_{j=1}^{k_{x_i}}\mu(A_j^{x_i})=1$,
and (\ref{equation-34}) is due to the continuity of the function $p(y|x)$ on $\mathcal Y\times \mathcal X$ which ensures
\begin{align}
\lim_{k_{x_i}\rightarrow \infty:\forall x_i} c_j^{x_i} = p(y|x_i,  x_{(i)}^{x_i}) 
\end{align} 
for one $x_{(i)}^{x_i} \in \mathcal X_{(i)}$. Therefore, the inequality (\ref{equation-39}) holds 
since the channel $p(y|x_i)=p(y|x_i,  x_{(i)}^{x_i})$ in the right side of (\ref{equation-34}) is in $\mathbb S_i$. 

The claims are proved. 
\qed
\end{proof}

\subsection{Gaussian Privacy Channel}

Now we give the Gaussian privacy channel, which is the counterpart of the Gaussian mechanism of differential privacy \cite[\S A]{DBLP:journals/fttcs/DworkR14}. We will not discuss the Laplace privacy channel, the counterpart of the Laplace mechanism of differential privacy. The reason why we choose the Gaussian channel instead of the Laplace channel is that the study of the former is more mature in the information theory. 

\begin{definition}[Gaussian Privacy Channel] \label{definition-3}
The privacy channel $p(y|x)$ of the continuous function $f:\mathcal X \rightarrow \mathcal Y$ is said to be Gaussian with the parameter $N$ if, to each $x\in \mathcal X$, there is
\begin{align}
p(y|x) \sim \mathcal N(f(x), N),
\end{align}
where $\mathcal Y\subseteq \mathbb R$.
\end{definition}

Now we evaluate the individual channel capacity of the Gaussian privacy channel.
The key point of this evaluating is that each channel $p(y|x_i)$ in $\mathbb S_i$ is a Gaussian channel in \cite[\S 10.1]{DBLP:books/daglib/0016881}. Note that the continuity of the function $f(x)$ on $\mathcal X$ ensures the continuity of the function $p(y|x)$ on $\mathcal Y \times \mathcal X$.
Let $\mathcal Y \subseteq [-T,T] $ with $T >0$. Then $Ef(X)^2 \le T^2$ for all $X\in \mathbb P$ and 
\begin{align}
C_1
=&  \max_{i\in [n]} \max_{p(y|x_i) \in \mathbb Q_i} \max_{p(x_i) \in \mathbb P_i}  I(X_i;Y) \\
=&  \max_{i\in [n]} \max_{p(y|x_i) \in \mathbb S_i} \max_{p(x_i) \in \mathbb P_i}  \left( H(Y) - \sum_{x_i}p(x_i)H(Y|x_i) \right) \\
=& \max_{i\in [n]} \max_{p(y|x_i) \in \mathbb S_i} \max_{p(x_i) \in \mathbb P_i}   H(Y) - H(Z)  \\
\le &  \frac{1}{2} \log \left(1+\frac{T^2}{N}\right) \label{equation-26}
\end{align}
where $Z \sim \mathcal N(0, N)$, and where (\ref{equation-26}) is due to \cite[\S 10.1]{DBLP:books/daglib/0016881}. 
Let 
\begin{align}
\frac{1}{2}\log(1+\frac{T^2}{N})\le \epsilon.
\end{align}
Then 
\begin{align}
N\ge \frac{T^2}{\exp(2\epsilon)-1}.
\end{align}
Therefore, if $N\ge \frac{T^2}{\exp(2\epsilon)-1}$, then the Gaussian privacy channel in Definition \ref{definition-3} satisfies $\epsilon$-information privacy with respect to $\mathbb P$.

We stress that the above result is only suitable to the case where $f(x) \in \mathbb R$. To the multi-dimensional query function $f: \mathcal X \rightarrow \mathbb R^d$, we can either use the Gaussian privacy channel in each dimension of $f(x)$ or construct multi-dimensional Gaussian privacy channel by using the results of \cite[Theorem 9.4.1]{DBLP:books/daglib/0016881} and \cite[Theorem 9.6.5]{DBLP:books/daglib/0016881}. The details are omitted in this paper.

\subsection{Basic Utility Analysis}

Now we give an intuitive comparison of the data utilities between the Gaussian privacy channel in Definition \ref{definition-3} and the Gaussian mechanism, the Laplace mechanism in the differential privacy model.
Noting that the variances of the probability distributions in these channels/mechanisms are crucial to indicate the utility, therefore, we only compare their variances.   

The variances of the Laplace mechanism and the Gaussian mechanism are respective $\frac{\Delta f}{\epsilon}$ and 
\begin{align} \label{equation-27}
\sqrt{2\ln (1.25/\delta')} \frac{\Delta f}{\epsilon},
\end{align}
in which (\ref{equation-27}) ensures $(\epsilon,\delta')$-differential privacy and $\Delta f$ is the global sensitivity. The variance of the Gaussian privacy channel in Definition \ref{definition-3} is 
\begin{align}
\sqrt{N} \ge  \frac{T}{\sqrt{\exp(2\epsilon)-1}}  \approx_{\epsilon \rightarrow 0} \frac{T}{\sqrt{2\epsilon}}.
\end{align}
Noticing that $\Delta \le T $ for the same $\epsilon$ and for the moderate $\delta'$, then the variances of the Laplace mechanism and the Gaussian mechanism are more smaller than the one of the Gaussian privacy channel. This implies that the first two mechanisms, in general, have more better utilities than the Gaussian privacy channel when $b=0$.  However, if the queried dataset is big enough, i.e. if $n$ is large enough, which ensures that Assumption \ref{assumption-1} is true for a moderately large $b$ and then for a moderately large $\delta$ (in Definition \ref{definition-2}), then the variance of the Gaussian privacy channel becomes
\begin{align}
\sqrt{N} \ge  \frac{T}{\sqrt{\exp(2\epsilon+2\delta)-1}}  \approx_{\epsilon + \delta\rightarrow 0} \frac{T}{\sqrt{2\epsilon+2\delta}}.
\end{align}
by Lemma \ref{lemma-5}. Then in this setting the Gaussian privacy channel may have more smaller variance than the Laplace mechanism and the Gaussian mechanism, especially when $n$ is large enough. This implies that the first channel roughly have more better utility than the last two mechanisms when $n$ is large enough.

The above discussions indicate that the information privacy model is more flexible than the differential privacy model and therefore may obtain more better utility than the later when $n$ is large enough.

\section{An Illustration of Differential Privacy} \label{sec:illustration-dp}

Recall that the differential privacy model can be considered as a special case of the information privacy model as discussed in Section \ref{subsec-dp-model}.
As a by-product, we now use the probability transition matrix of a privacy channel to illustrate the differentially private mechanisms. This
illustration can be used to explain the approaches introduced in \cite{wu-he-xia2017-b} to construct differentially private mechanisms. 


\subsection{Local Model ($n=1$)}

The local model \cite{DBLP:conf/ccs/ErlingssonPK14,DBLP:journals/fttcs/DworkR14} obtains more and more attentions. In this model, each individual first perturbs its own data such that the results satisfy $\epsilon$-differential privacy and then submits the results to the aggregator. This is equivalent to the case that there is only one record in the queried dataset and then $n=1$. 
Then the inequalities in Definition \ref{definition-1} can be given a graphical explanation from the privacy channel's probability transition matrix
\begin{align} \label{equation-22}
\begin{pmatrix}
p(y_1|x_1) & p(y_1|x_2) & \cdots & p(y_1|x_{\ell}) \\
p(y_2|x_1) & p(y_2|x_2) & \cdots & p(y_2|x_{\ell}) \\
\vdots & \vdots & \ddots & \vdots \\
p(y_k|x_1) & p(y_k|x_2) & \cdots & p(y_k|x_{\ell}) 
\end{pmatrix}
\end{align}
where $\mathcal X=\{x_1,\ldots, x_{\ell}\}, \mathcal Y=\{y_1,\ldots,y_k\}$. Specifically, the inequalities in Definition \ref{definition-1} is equivalent to the constraint that \emph{any two elements in each row of the matrix (\ref{equation-22})  have the ratio $\le e^{\epsilon}$.} 

\subsection{The Case of $n=2$}

In this case, we only give an example to explain the results. The general results are shown in Section \ref{subsec-dp-general-result}. 
We reuse the function $f$ in Example \ref{example-1}, which is just the case of $n=2$. As shown in Fig. \ref{figure-1}, its privacy channel is a three-dimensional matrix. Then the inequalities of differential privacy in Definition \ref{definition-1} is equivalent to the constraint that \emph{any two elements in each line paralleled either to the  $x_1$-coordinate axis or to the $x_2$-coordinate axis in Fig. \ref{figure-1} have the ratio $\le e^{\epsilon}$}.  

\subsection{The General Result} \label{subsec-dp-general-result}

Let $A$ be the probability transition matrix of the differentially private channel $(\mathcal X, p(y|x), \mathcal Y)$. Then it is an $(n+1)$-dimensional matrix and can be shown in an (n+1)-dimensional space as in Fig. \ref{figure-1}. Then the inequalities in Definition \ref{definition-1} is equivalent to the constraint that \emph{any two elements in each line of the matrix $A$, paralleled to one of the $x_i$-coordinate axis for $i\in [n]$,  have the ratio $\le e^{\epsilon}$.}

\section{Conclusion}\label{section:conclusion}

This paper shows that Theorem \ref{theorem-1} and Theorem \ref{corollary-1} are powerful approaches to transplant the results of  channels in the information theory into the information privacy model to construct privacy channels. For example, the binary symmetric privacy channel and the Gaussian privacy channel are constructed by using the channels in the information theory. This implies a promising direction to construct privacy channels using the channels in the information theory.

The Gaussian privacy channel and the exponential privacy channel can take the role of the Laplace mechanism and the exponential mechanism of differential privacy. Specifically, by combining the above privacy channels with the post-processing invariance property, i.e. the data-processing inequality, the group privacy property and the composition privacy property \cite{DBLP:journals/corr/abs-1907-09311}, we are ready to modularly construct complicated privacy channels for complicated data processing problems in the information privacy model. 

The utilities comparison of the Gaussian privacy channel of this paper and the Gaussian mechanism, the Laplace mechanism of the differential privacy model gives us a glimpse of the information privacy model: The information privacy model roughly is more flexible to obtain more better utility than the differential privacy model without loss of privacy when the queried dataset is large enough.  

How to evaluate the balance functions of  privacy channels is a more complicated problem, which is left as an open problem.

\bibliographystyle{unsrt}

\end{document}